\newtheorem{theorem}{Theorem}[section]
\newtheorem{lemma}[theorem]{Lemma}
\newtheorem{corollary}[theorem]{Corollary}
\theoremstyle{definition}
\theoremstyle{remark}
\newcommand{\yes}{\textsc{Yes}}
\newcommand{\no}{\textsc{No}}
\newcommand{\Oh}{\mathcal{O}}
\newcommand{\terms}{\mathcal{T}}
\newcommand{\weight}{\omega}
\newcommand{\inst}{\mathcal{I}}
\newcommand{\gdpc}{\textsc{GDPC}}
\newcommand{\pairs}{\mathcal{C}}
\newcommand{\bundles}{\mathcal{B}}
\newcommand{\maxarity}{b}
\title{On weighted graph separation problems and flow-augmentation%
\thanks{This research is a part of a project that have received funding from the European Research Council (ERC)
under the European Union's Horizon 2020 research and innovation programme
Grant Agreement 714704 (M. Pilipczuk). Eun Jung Kim is supported by the grant from French National Research Agency under JCJC program (ASSK: ANR-18-CE40-0025-01).}
}
\date{}
\author[1]{Eun Jung Kim}
\author[2]{Tom\'{a}\v{s} Masa\v{r}\'{i}k}
\author[2]{Marcin Pilipczuk}
\author[3]{Roohani Sharma}
\author[4]{Magnus Wahlstr\"{o}m}
\affil[1]{Universit\'{e} Paris-Dauphine, PSL Research University, CNRS, UMR 7243, LAMSADE, 75016, Paris, France.}
\affil[2]{University of Warsaw, Warsaw, Poland}
\affil[3]{Max Planck Institute for Informatics, Saarland Informatics Campus, Saarbr\"{u}cken, Germany.}
\affil[4]{Royal Holloway, University of London, TW20 0EX, UK}
\begin{document}

\maketitle

\begin{textblock}{20}(0, 13.0)
\includegraphics[width=40px]{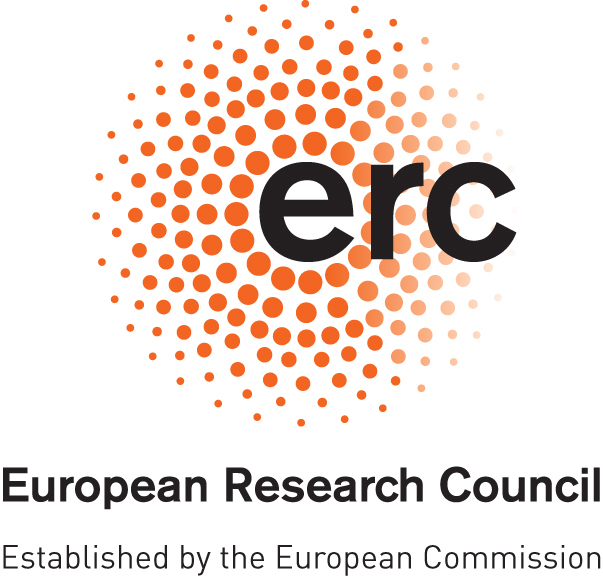}%
\end{textblock}
\begin{textblock}{20}(0, 13.9)
\includegraphics[width=40px]{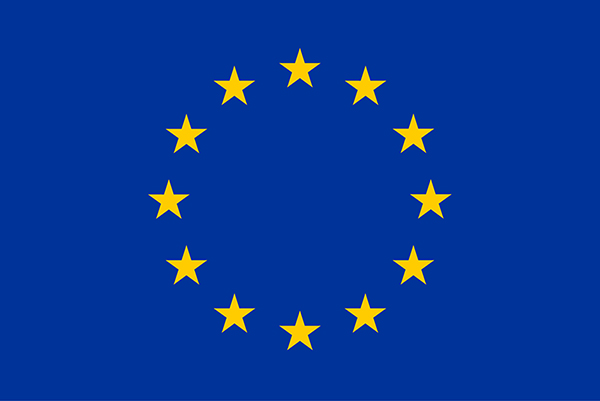}%
\end{textblock}

\begin{abstract}
One of the first application of
the recently introduced technique of \emph{flow-augmentation} [Kim et al., STOC 2022]
is a fixed-parameter algorithm for the weighted version of \textsc{Directed Feedback Vertex Set},
a landmark problem in parameterized complexity. 
In this note we explore applicability of flow-augmentation to other weighted graph separation problems parameterized by the size of the cutset.
We show the following.
\begin{itemize}
\item In weighted undirected graphs \textsc{Multicut} is FPT, both in the edge- and vertex-deletion version. 
\item The weighted version of \textsc{Group Feedback Vertex Set} is FPT, even with an oracle access to group operations.
\item The weighted version of \textsc{Directed Subset Feedback Vertex Set} is FPT.
\end{itemize}
Our study reveals \textsc{Directed Symmetric Multicut} as the next important graph separation problem whose parameterized complexity remains unknown, even in the unweighted setting.
\end{abstract}

\section{Introduction}

The family of graph separation problems includes a wide range of combinatorial problems where the goal is to remove a small part of the input graph to obtain
some separation properties. For example, in the \textsc{Multicut} problem, the input graph $G$ is equipped with a set of terminal pairs $\terms \subseteq V(G) \times V(G)$
and the separation objective is to destroy, for every $(s,t) \in \terms$, all paths from $s$ to $t$. 
In the \textsc{Subset Feedback Edge/Vertex Set} problems, the input graph $G$ is equipped with a set $R \subseteq E(G)$ of \emph{red edges} and the goal
is to destroy all cycles that contain at least one red edge.%
\footnote{In the literature, sometimes one considers red vertices instead of red edges. Since there are simple reductions between the variants (cf.~\cite{sfvs}), 
  we prefer to work with red edges.}
We remark that in directed graphs, one can equivalently require to destroy all closed walks
containing at least one red edge.

Both these problems (and many others) can be considered in multiple variants: graphs can be undirected or directed, we are allowed to delete edges or vertices, weights can be present, etc. 
In this paper we consider both edge- and vertex-deletion variants and both cardinality and weight budget for the solution. That is, the input graph $G$ is equipped with a weight function
$\weight$ that assigns positive integral weights to deletable objects (i.e., edges or vertices), and we are given two integers: $k$, the maximum number of deleted objects, and $W$, 
the maximum total weight of the deleted objects in the sought solution.

The study of parameterized complexity of graph separation problems has been a vivid line for the past two decades, and resulted in many tractability results and a wide range
of algorithmic techniques:
important separators and shadow removal~\cite{dfvs,dsfvs,dir-mwc,sfvs,marx:1,multicut-in-DAGs,marx:multicut,a2sat},
branching guilded by an LP relaxation~\cite{mwc-a-lp,sylvain,IwataWY16},
matroid-based techniques~\cite{ms1,ms2},
treewidth reduction~\cite{tw-red},
randomized contractions~\cite{rand-contr,lean-decomp},
and, most recent, flow-augmentation~\cite{ufl,dfl}.
However, the vast majority of these works considered only the unweighted versions of the problems,
for a very simple reason: we did not know how to handle their weighted counterparts.
In particular, one of the most fundamental notion --- important separators, introduced by Marx in 2004~\cite{marx:1} --- relies on a greedy argument that breaks down in the presence of weights.
The quest to understand the weighted counterparts of studied graph separation problems, with a specific goal to resolve the parameterized complexity of
the weighted version of \textsc{Directed Feedback Vertex Set} --- the landmark problem in parameterized complexity~\cite{dsfvs} --- was raised by Saurabh in 2017~\cite{rapcslides}
(see also~\cite{LokshtanovRS16}).

This question has been resolved recently by Kim et al.~\cite{dfl} with a new algorithmic technique called \emph{flow-augmentation}. 
Apart from proving fixed-parameter tractability of the weighted version of \textsc{Directed Feedback Vertex Set}, 
they also showed fixed-parameter tractability of \textsc{Chain SAT}, resolving another long-standing open problem~\cite{ChitnisEM17}. 
Both the aforementioned results are in fact the same relatively simple algorithm for a more general problem \textsc{Weighted Bundled Cut with Order},
     and solve also the weighted version of \textsc{Chain SAT}.

Very recently, Galby et al.~\cite{GalbyMSST22WG} used the flow-augmentation technique to design an FPT algorithm for weighted \textsc{Multicut} on trees. Our results thus extend theirs, by generalizing the input graphs from trees to arbitrary undirected graphs. 
     
\paragraph{Our results.}
The goal of this note is to explore for which other graph separation problems the flow-augmentation technique helps in getting fixed-parameter algorithms for weighted
graph separation problems. (All algorithms below are randomized; all randomization comes from the flow-augmentation technique.)

We start with the \textsc{Multicut} problem in undirected graphs, whose parameterized complexity --- in the unweighted setting --- had been a long-standing open problem until
being settled in the affirmative by two independent groups of researchers in 2011~\cite{thomasse:multicut,marx:multicut}.
\begin{theorem}\label{thm:multicut}
\textsc{Weighted Multicut}, parameterized by the cardinality of the cutset, is randomized FPT, both in the edge- and vertex-deletion variants.
\end{theorem}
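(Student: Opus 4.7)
The plan is to adapt the classical approach of Marx and Razgon~\cite{marx:multicut} and Bousquet, Daligault, and Thomass\'e~\cite{thomasse:multicut} for unweighted \textsc{Multicut}, replacing the branching on important separators---which relies on a greedy argument that fails under arbitrary positive weights---by the flow-augmentation machinery of~\cite{dfl}.

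First, I apply iterative compression on the edges of the sought solution, reducing to the following compression task: given the input together with a multicut $W$ of cardinality $k+1$, find a multicut $Z$ of cardinality at most $k$ and total weight at most $B$. I guess $Z \cap W$ in $2^{k+1}$ branches, subtract its weight from the budget, and declare the remaining edges of $W$ undeletable. Since $|W|\le k+1$, after removing $W$ the graph splits into at most $k+2$ connected components, giving a bounded ``skeleton'' in which the terminals live. Next, I perform shadow removal with respect to the terminal set in the style of~\cite{marx:multicut}: this is a random-sampling argument that depends only on the set of vertices reachable from a terminal after the deletion, so it transfers to the weighted setting verbatim.

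The crucial new ingredient replaces branching on important separators by flow-augmentation. For each terminal pair $(s,t)$ whose separation is still required, I apply the flow-augmentation theorem of~\cite{dfl} to enrich the graph with a random set of edges so that, with probability $2^{-f(k)}$, the sought solution $Z'$ becomes a \emph{minimum} $(s,t)$-cut in the augmented graph. Composing the augmentations over the (now bounded in $k$) set of relevant terminal pairs yields an instance in which $Z'$ must simultaneously be a minimum cut for every remaining pair, subject to the cardinality bound and the undeletability constraints on $W$. This is precisely the setting of \textsc{Weighted Bundled Cut with Order} of~\cite{dfl}, once sources and sinks have been identified via the standard bundling and edge-multiplication gadgets, and can thus be solved in FPT time. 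The vertex-deletion version reduces to the edge version via the standard split-vertex gadget: replace each deletable vertex $v$ by an edge $v^- v^+$ of weight $\weight(v)$ and make the original edges undeletable by assigning them large weight.

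The main obstacle is controlling how the flow-augmentations interact across multiple terminal pairs. In the single source--sink case of~\cite{dfl} one augmentation suffices and the success probability analysis is immediate; here one must argue that after the skeleton reduction and shadow removal only an FPT-bounded number of terminal pairs remain ``active,'' so that the product of success probabilities across sequential augmentations remains $2^{-f(k)}$ and the resulting \textsc{Weighted Bundled Cut with Order} instance carries only $f(k)$ bundles and ordering constraints.
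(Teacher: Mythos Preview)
Your proposal has a genuine gap at the point where you ``compose'' flow-augmentations over multiple terminal pairs and then claim the result is an instance of \textsc{Weighted Bundled Cut with Order}. Flow-augmentation, as stated in~\cite{dfl}, takes a single pair $(s,t)$ and adds edges so that a fixed $st$-cut becomes an \emph{$st$-maxflow-sized} cut; it says nothing about several pairs simultaneously, and an augmentation for a second pair can destroy the min-cut property that the first augmentation achieved. More fundamentally, \textsc{Weighted Bundled Cut with Order} is a \emph{single source--single sink} problem: a solution is an $st$-cut that violates few bundles. You never explain how a set that must separate many distinct pairs $(s_i,t_i)$ becomes a single $st$-cut in some auxiliary graph; ``standard bundling and edge-multiplication gadgets'' do not exist for this step. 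The whole difficulty of \textsc{Multicut}---and the reason it was open for years---is precisely that it is not of this single-cut form, and your outline does not bridge this. The assertion that only an FPT-bounded number of terminal pairs remain ``active'' after shadow removal is also unjustified (shadow removal does not reduce the number of pairs), and for undirected graphs the split-vertex gadget you invoke is a directed-graph trick; in fact the paper does the opposite reduction, from edge deletion to vertex deletion.

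What the paper actually does is quite different. It follows the \emph{bipedal-variant} route from the conference version of Marx--Razgon~\cite{marx:multicut-stoc}: after iterative compression one has a set $X$ with $G-X$ already separated, guesses the partition of $X$ among components of $G-Z$, and then branches using \emph{shattering sets} (their Lemma~5.x, restated here as Lemma~\ref{lem:multicut:shatter}) to reduce to \emph{bipedal} instances in which every component of $G-X$ touches at most two vertices of $X$. The single step in that branching that originally used important separators is replaced by a weight-resilient move: instead of pushing a separator greedily, one simply inserts the chosen vertex $v$ into $X$ and recurses, with a careful progress measure. Only \emph{after} the instance is bipedal does one obtain a single $s$--$t$ cut formulation, namely a $2K_2$-free $4$-bounded instance of \textsc{Generalized Digraph Pair Cut}, which is then solved via Theorem~\ref{thm:gdpc:2k2} (this is where flow-augmentation is used, but entirely inside the black box). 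So the paper never applies flow-augmentation per terminal pair, never uses shadow removal, and never lands in \textsc{Bundled Cut with Order}; the nontrivial encoding of many pairs into one $st$-cut is exactly the bipedal-to-\gdpc{} construction of Section~\ref{ss:multicut:bipedal}, which your proposal is missing.
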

Theorem~\ref{thm:multicut} follows from a combination of two arguments. 
First, we revisit the reduction of Marx and Razgon from \textsc{Multicut} to a \emph{bipedal variant}, presented in the conference version of their paper~\cite{marx:multicut-stoc}
and show how to replace one greedy step based on important separators with a different, weights-resilient step. 
Then, a folklore reduction to a graph separation problem called \textsc{Coupled Min-Cut}, spelled out in~\cite{ufl-arxiv}, does the job: 
the fixed-parameter tractability of a wide generalization of \textsc{Coupled Min-Cut}, including its weighted variant, is one of the main applications of flow-augmentation~\cite{ufl,dfl,dfl-csp}. 

\textsc{Multiway Cut} is a special case of \textsc{Multicut} where the input graph $G$ is equipped with a set $T \subseteq V(G)$ of terminals
and $\terms = \{(s,t)~|~s,t \in T, s \neq t\}$, that is, we are to destroy all paths between distinct terminals.
Thus, Theorem~\ref{thm:multicut} implies the following.
\begin{corollary}\label{cor:multiway-cut}
\textsc{Weighted Multiway Cut}, parameterized by the cardinality of the cutset, is randomized FPT, both in the edge- and vertex-deletion variants.
\end{corollary}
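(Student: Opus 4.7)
The plan is a direct polynomial-time reduction from \textsc{Weighted Multiway Cut} to \textsc{Weighted Multicut}, after which Theorem~\ref{thm:multicut} applies as a black box. Given an instance of \textsc{Weighted Multiway Cut} consisting of a graph $G$, a weight function $\weight$, a terminal set $T \subseteq V(G)$, and budgets $k,W$, I would construct the \textsc{Weighted Multicut} instance on the same $G$ and $\weight$ with terminal pairs $\terms = \{(s,t) \mid s,t \in T,\ s \neq t\}$ and the same budgets $k$ and $W$. A set $S$ of edges (respectively, of non-terminal vertices) of cardinality at most $k$ and weight at most $W$ separates every pair of distinct terminals in $G-S$ if and only if it is a multicut for $\terms$, so the instances are equivalent. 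Since $|\terms| \le |T|^2$, this transformation is polynomial and preserves both parameters, so the randomized FPT algorithm of Theorem~\ref{thm:multicut} yields one for \textsc{Weighted Multiway Cut}.

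The only minor bookkeeping concerns the vertex-deletion variant, where terminals must remain undeletable: this is standard for both problems and can be enforced uniformly by assigning weight $W+1$ (or $\infty$) to every vertex in $T$, which is consistent with the usual convention for \textsc{Multicut}. There is no real obstacle to overcome; the corollary is immediate from Theorem~\ref{thm:multicut} and recorded here only to make the consequence explicit.
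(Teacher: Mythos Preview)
Your proposal is correct and matches the paper's approach exactly: the paper simply observes that \textsc{Multiway Cut} is the special case of \textsc{Multicut} with $\terms = \{(s,t)\mid s,t\in T,\ s\neq t\}$ and invokes Theorem~\ref{thm:multicut}. No further argument is given (or needed).
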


We remark that in directed graphs
the parameterized complexity of \textsc{Multicut} is fully understood:
without weights, it is W[1]-hard for 4 terminal pairs~\cite{PilipczukW18a} and FPT for 3 terminal pairs~\cite{dir-3-multicut},
but with weights it is already W[1]-hard for 2 terminal pairs~\cite{dir-3-multicut}, while for 1 terminal pair
it is known under the name of \textsc{Bi-Objective $st$-cut} and its fixed-parameter tractability follows easily via flow-augmentation~\cite{dfl}.
Furthermore, while \textsc{Multiway Cut} on directed graphs is FPT in the unweighted setting~\cite{dir-mwc},
on directed graphs \textsc{Multicut} with 2 terminal pairs reduces to \textsc{Multiway Cut} with two terminals~\cite{dir-mwc},
hence \textsc{Multiway Cut} with weights is W[1]-hard and without weights is FPT on directed graphs.

Then we turn our attention to \textsc{Group Feedback Edge/Vertex Set}. Here, the input graph $G$ is equipped with a group $\Gamma$, not necessarily Abelian,
and an assignment $\psi$, called the {\em group labels}, that assigns to every $e \in E(G)$ and $v \in e$ an element $\psi(e,v) \in \Gamma$ such that for $e=uv$
we have $\psi(e,u) + \psi(e,v) = 0$.%
\footnote{Thorough this paper, we use $+$ for the group operation, $0$ for the neutral element in the group, and $-$ for the group inverse,
  to conform with the standard terminology of \emph{null cycles} in GFVS. We note that this is in tension with the convention that a group operation written
  as $+$ tends to imply an Abelian group.} 
With a walk $C = (v_1,e_1,v_2,e_2,\ldots,v_\ell,e_\ell,v_{\ell+1})$ we associate a sum $\psi(C) = \sum_{i=1}^\ell \psi(e_i,v_i)$;
a walk $C$ is a \emph{null walk}
in $(G,\psi)$
 if $\psi(C) = 0$ and \emph{non-null} otherwise. This is well-defined even for non-Abelian groups, i.e.,
a cycle being null or non-null does not depend on the direction of traversal or the choice of starting vertex $v_1$~\cite{CyganPP16}.
The separation goal is to destroy all non-null cycles (equivalently, all non-null closed walks)
by edge or vertex deletions.

\begin{theorem}\label{thm:gfvs}
\textsc{Weighted Group Feedback Edge Set}
and 
\textsc{Weighted Group Feedback Vertex Set},
parameterized by the cardinality of the cutset, are randomized FPT.
\end{theorem}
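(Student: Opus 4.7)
My plan is to follow the standard template used for the unweighted versions of these problems, replacing the one place where important separators are used by an invocation of the weighted flow-augmentation machinery from Kim et al.~\cite{dfl}. First I would reduce \textsc{Weighted Group Feedback Vertex Set} to \textsc{Weighted Group Feedback Edge Set} via the usual vertex-splitting gadget: replace each vertex $v$ of weight $\weight(v)$ with two copies $v^-,v^+$ joined by an edge $e_v$ of weight $\weight(v)$ and label $0$, and route the incident edges so that every walk through the old $v$ traverses $e_v$ (giving all other edges weight $+\infty$, i.e., making them undeletable). It therefore suffices to prove the theorem for the edge version.

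For \textsc{Weighted Group Feedback Edge Set}, I would apply iterative compression: maintain a solution $F_0\subseteq E(G)$ with $|F_0|=k+1$ such that $G-F_0$ is null-labeled (all cycles are null), and branch over the guess $F\cap F_0$ to reduce to the disjoint-compression problem of finding $F\subseteq E(G)\setminus F_0$ with $|F|\le k$ and $\weight(F)\le W$ such that $G-F$ is null-labeled. Using the group oracle, I would run a BFS in each connected component of $G-F_0$ to compute a vertex potential $\phi:V(G)\to\Gamma$ witnessing null-labeledness. Each edge $e=uv\in F_0$ is then either \emph{consistent} with $\phi$ (and can be kept for free) or \emph{inconsistent}, in which case every $u$-$v$ walk in $G-F-e$ must carry one specific ``cancelling'' group element in order for $G-F$ to remain null-labeled. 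Equivalently, in the $\Gamma$-cover $\hat G$ of $G-F_0$, such an inconsistent edge becomes a demand to separate two explicit terminals lying in different ``layers'' of the cover.

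I would then encode the collection of all inconsistent edges in $F_0$ as an instance of \textsc{Weighted Bundled Cut with Order} (the master problem solved in~\cite{dfl} via flow-augmentation), with one terminal pair per inconsistent edge, and with bundles encoding which edges of $G$ correspond, in the cover, to copies of a single original edge (so that a single deletion in $G$ kills all its lifts simultaneously). Solving this instance with the flow-augmentation algorithm yields $F$ in randomized FPT time. The main obstacle is executing this encoding when $\Gamma$ is non-Abelian and accessed only through an oracle: the $\Gamma$-cover is potentially infinite, so I would never materialize it, and instead argue locally that, along any $u$-$v$ walk, the accumulated label computed by the oracle is determined by $\phi$ up to a constant coset. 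A further subtlety is that flow-augmentation augments the instance with new edges; one must check that the augmentation can be lifted consistently to the cover, which I expect to follow by performing the augmentation on $G-F_0$ itself (whose structure is rigid because of $\phi$) and then transferring the resulting flow topology to the cut problem above. Once this is set up, fixed-parameter tractability follows directly from the main theorem of~\cite{dfl}.
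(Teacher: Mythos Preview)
Your first reduction goes in the wrong direction. The vertex-splitting gadget $v\mapsto v^-,v^+,e_v$ only forces every walk through $v$ to traverse $e_v$ in \emph{directed} graphs; in the undirected setting (which is where \textsc{Group Feedback Vertex/Edge Set} lives), however you attach the old neighbours of $v$ to $\{v^-,v^+\}$, two neighbours attached to the same side yield a walk $u_1\text{--}v^-\text{--}u_2$ that avoids $e_v$. The paper explicitly notes this obstruction (Section~\ref{sec:prelims}) and reduces the other way: subdivide every edge to turn the edge-deletion problem into the vertex-deletion one, and then work with \textsc{Weighted Group Feedback Vertex Set} throughout.

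The second gap is in your $\Gamma$-cover encoding. A single edge of $G$ has $|\Gamma|$ lifts in the cover, so the bundle you propose for it has size $|\Gamma|$; the running time of \textsc{Weighted Bundled Cut with Order} (Theorem~\ref{thm:gdpc:linked}) depends polynomially in the exponent on the maximum number of deletable arcs per bundle. With only oracle access to $\Gamma$, the group can be arbitrarily large or infinite, so this does not give an FPT algorithm. Saying you will ``never materialize the cover and argue locally'' does not fix this: flow-augmentation needs an explicit finite instance to augment, and you have not described one.

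The paper's route avoids both issues by following~\cite{CyganPP16} verbatim and only swapping in a weighted subroutine at the very end. After iterative compression yields $X\subseteq V(G)$ with $|X|\le k+1$ and $G-X$ null-labeled, one untangles so that all edges of $G-X$ have label~$0$. The key observation is that, given a guess $\phi_0:X\to\Gamma$ for the restriction of a consistent labeling to $X$, finding a minimum-weight $Z$ with $\phi_0$ extending over $G-Z$ is exactly \textsc{Weighted Multiway Cut} (Lemma~\ref{lem:gfvs:mwc}); flow-augmentation enters only here, via Corollary~\ref{cor:multiway-cut}, and never sees the group. The combinatorial Lemmata~\ref{lem:gfvs:red} and~\ref{lem:gfvs:prop} (both weight-oblivious) then bound the number of relevant guesses $\phi_0$ by $2^{\Oh(k\log k)}$, independently of $|\Gamma|$.
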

Since \textsc{Weighted Subset Feedback Edge/Vertex Set} can be modeled as 
\textsc{Weigted Group Feedback Edge/Vertex Set} with group
$\Gamma = \mathbb{Z}_2^R$ (cf.~\cite{CyganPP16}), we immediately have the following corollary.
\begin{corollary}\label{cor:sfvs}
\textsc{Weighted Subset Feedback Edge Set}
and 
\textsc{Weighted Subset Feedback Vertex Set},
parameterized by the cardinality of the cutset, are randomized FPT.
\end{corollary}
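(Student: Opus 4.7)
The plan is to present a polynomial-time reduction from \textsc{Weighted Subset Feedback Edge/Vertex Set} to \textsc{Weighted Group Feedback Edge/Vertex Set} with the group $\Gamma = \mathbb{Z}_2^R$, and then invoke Theorem~\ref{thm:gfvs}. Given an input $(G, R, \weight, k, W)$ with red edges $R \subseteq E(G)$, I construct the labeling $\psi$ that assigns $\psi(r, u) = \psi(r, v) = \mathbf{1}_r$, the $r$-th standard basis vector of $\mathbb{Z}_2^R$, to each red edge $r = uv \in R$, and $\psi(f, u) = \psi(f, v) = 0$ to each non-red edge $f = uv$. The constraint $\psi(e, u) + \psi(e, v) = 0$ holds trivially since every element of $\mathbb{Z}_2^R$ is self-inverse. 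The deletable set, the weight function $\weight$, and the budgets $(k, W)$ are copied verbatim into the GFES/GFVS instance.

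The correctness statement I need is that a candidate solution $S$ (of edges, respectively of vertices) destroys every cycle of $G$ containing a red edge if and only if $(G - S, \psi)$ has no non-null closed walk. One direction is immediate: any simple cycle $C$ of $G - S$ containing at least one red edge has $\psi(C)$ equal to the characteristic vector of its red-edge set in $\mathbb{Z}_2^R$, which is nonzero. For the other direction I would invoke the folklore $\mathbb{Z}_2$-cycle-space argument: a closed walk $W$ in an undirected graph, considered as an edge-multiset reduced modulo $2$, lies in the cycle space and therefore decomposes as a symmetric difference of simple cycles of $G - S$; if every such simple cycle is red-free, then every red edge is traversed an even number of times in $W$, so $\psi(W) = 0$.

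The final step is to feed the constructed instance $(G, \psi, \weight, k, W)$ to the algorithm of Theorem~\ref{thm:gfvs}. The group $\mathbb{Z}_2^R$ can be of exponential size $2^{|R|}$, but its elements are encoded as $|R|$-bit vectors and addition (bitwise XOR) and equality testing are polynomial, so the oracle for group operations that Theorem~\ref{thm:gfvs} relies on is efficiently implementable. Consequently the randomized FPT running time in $k$ transfers to the Subset FES/FVS instance. I do not foresee a genuine obstacle: the only part beyond routine bookkeeping is the closed-walk-versus-simple-cycle equivalence, which is settled by the elementary cycle-space decomposition above; everything else is a direct application of the previous theorem.
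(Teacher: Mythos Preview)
Your proposal is correct and follows the same approach as the paper: the paper simply states that \textsc{Weighted Subset Feedback Edge/Vertex Set} can be modeled as \textsc{Weighted Group Feedback Edge/Vertex Set} with group $\Gamma = \mathbb{Z}_2^R$ (citing~\cite{CyganPP16}) and invokes Theorem~\ref{thm:gfvs}. You spell out this reduction and its correctness in more detail than the paper does, but the argument is the same.
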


The currently fastest FPT algorithm for (unweighted) \textsc{Group Feedback Vertex Set} is due to Iwata, Wahlstr\"{o}m, and Yoshida~\cite{IwataWY16}
and uses sophisticated branching guided by an LP relaxation. 
To prove Theorem~\ref{thm:gfvs},
we revisit an older (and less efficient) FPT algorithm due to Cygan et al.~\cite{CyganPP16} that performs some branching steps to reduce the problem
to multiple instances of \textsc{Multiway Cut}.
We observe that the branching easily adapts to the weighted setting, and 
the algorithm for \textsc{Weighted Multiway Cut} is provided by Corollary~\ref{cor:multiway-cut}.

We now move to directed graphs. As already mentioned, the parameterized complexity of both weighted and unweighted \textsc{Directed Multicut} (and \textsc{Directed Multiway Cut}) is already fully understood~\cite{dir-3-multicut,PilipczukW18a}.
Our main result here is fixed-parameter tractability of \textsc{Weighted Directed Subset Feedback Edge/Vertex Set}.
\begin{theorem}\label{thm:dsfvs}
\textsc{Weighted Directed Subset Feedback Edge Set}
and 
\textsc{Weighted Directed Subset Feedback Vertex Set},
parameterized by the cardinality of the cutset, are randomized FPT.
\end{theorem}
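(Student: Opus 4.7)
The plan is to follow the blueprint established in~\cite{dfl} for \textsc{Weighted Directed Feedback Vertex Set}: take the unweighted FPT algorithm for \textsc{Directed Subset Feedback Vertex Set} of Chitnis et al.~\cite{dsfvs} and replace its uses of important separators by flow-augmentation. First, by subdividing each edge so that it is represented by a fresh vertex of the same weight while the original vertices of $G$ are declared undeletable (e.g.\ by setting their weight to $W+1$), the edge-deletion variant reduces to the vertex-deletion variant, and we focus on the latter throughout. Next, we apply iterative compression on the cardinality parameter $k$, reducing to the disjoint-compression variant in which we are given a solution $Y$ of size $k+1$ and look for $X$, disjoint from $Y$, with $|X| \le k$ and $\weight(X) \le W$, that hits every red cycle in $G - Y$; a branching step of cost $2^{|Y|}$ handles the intersection of $X$ with $Y$.

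The heart of the algorithm is the shadow-removal phase. In the unweighted setting of~\cite{dsfvs}, a random sampling of vertices driven by important separators ensures that, with good probability, the sought $X$ is \emph{shadowless}, meaning that every vertex of $G - X$ lies on a cycle through a red edge; this reduces the instance to a structurally much simpler separation problem. The greedy argument behind important separators breaks down completely under weights. We therefore plan to replace this phase by enumerating, through branching of cost $f(k)$, the relevant combinatorial structure of $X$ relative to $Y$ and the red edges (including, as in~\cite{dsfvs}, the partition of $Y$ across the strongly connected components of $G - X$ incident to red edges, and the pattern of pairs in $Y$ that $X$ separates), and then invoking flow-augmentation so that $X$ becomes a minimum cut for an appropriate source--sink configuration in the augmented graph. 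Once augmented and labelled, the remaining task is to locate a coordinated family of minimum cuts subject to the guessed side constraints, which is precisely the setting of \textsc{Weighted Bundled Cut with Order} from~\cite{dfl}, known to be randomized FPT.

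The main obstacle is encoding the shadow-removal step inside the bundle/order formalism of~\cite{dfl}: unlike the single-cut situation of \textsc{Weighted Directed Feedback Vertex Set}, here $X$ must simultaneously break \emph{all} red cycles, so we effectively need a family of cuts whose combined effect certifies red-cycle-freeness. Designing bundles that express this global obligation while the order constraints faithfully reflect the topological structure imposed by $Y$, and doing so in a way compatible with $f(k)$ invocations of flow-augmentation without a blow-up of the failure probability beyond $1/f(k)$, is where the bulk of the technical work will concentrate.
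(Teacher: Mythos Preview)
Your proposal is not a proof yet: you explicitly flag the central step---encoding the shadow-removal phase of~\cite{dsfvs} in the \textsc{Bundled Cut with Order} formalism---as ``where the bulk of the technical work will concentrate,'' without carrying it out. Everything before that point (iterative compression, guessing the partition and topological order of the compression set among the strongly connected components of $G-X$) is standard and shared with the paper, but the decisive reduction is missing. It is not obvious that the shadow-removal machinery of~\cite{dsfvs} can be faithfully re-expressed through flow-augmentation; important separators there are used to \emph{sample} vertices so that the surviving instance becomes shadowless, and there is no evident bundle structure witnessing that sampling.

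The paper sidesteps shadow removal entirely. After guessing the partition of the compression set $X$ among the strongly connected components of $G-Z$ and a topological order $x_1,\ldots,x_{|X|}$, it builds a single \gdpc{} instance directly: take $2|X|+1$ layered copies $G^1,\ldots,G^{2|X|+1}$ of $G$, add crisp ``down'' arcs $(u^b,u^a)$ for all $a<b$, add for every red arc $(u,v)$ an ``up'' arc $(u^{2a},v^{2a+1})$, and attach $s$ and $t$ to the copies of the $x_i$ according to the guessed order. The $2|X|+1$ copies of each non-red edge $e$ form a bundle $B_e$ of weight $\weight(e)$. The crisp down-arcs make every bundle have pairwise linked deletable edges, so Theorem~\ref{thm:gdpc:linked} applies. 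Two short lemmata (forward and backward correctness) show that an $st$-cut violating at most $k$ bundles of total weight at most $W$ exists iff the guessed branch admits a solution. Thus the paper's contribution is precisely this layered gadget, not an adaptation of the shadow-removal pipeline; your plan heads toward a harder route whose feasibility is unclear. (A minor point: in directed graphs the natural reduction goes from vertex deletion to edge deletion via vertex splitting, which is what the paper does; your edge-to-vertex direction also works but is less convenient here.)
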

Theorem~\ref{thm:dsfvs} follows from a surprisingly delicate reduction to \textsc{Weighted Bundled Cut with Order}, known to be FPT via flow-augmentation~\cite{dfl}.

\textsc{Skew Multicut} is a special case of \textsc{Directed Multicut} where the set $\terms$ has the form $\{(s_i,t_j)~|~1 \leq i \leq j \leq \ell\}$ for some terminals
$s_1,\ldots,s_\ell,t_1,\ldots,t_\ell \in V(G)$. 
\textsc{Skew Multicut} naturally arises in the context of \textsc{Directed Feedback Vertex Set} if one applies the iterative compression technique. 
In the unweighted setting, \textsc{Skew Multicut} is long known to be FPT parameterized by the size of the cutset~\cite{dfvs}.
With weights, \cite{dfl} showed that \textsc{Skew Multicut} is FPT when parameterized by $k + \ell$. 
We observe a simple reduction to \textsc{Weighted Directed Subset Feedback Vertex Set}, yielding fixed-parameter tractability when parameterizing by $k$ only.

\begin{corollary}\label{cor:skew-multicut}
\textsc{Weighted Skew Multicut}, parameterized by the cardinality of the cutset, is randomized FPT, both in the edge- and vertex-deletion variants.
\end{corollary}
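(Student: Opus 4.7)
The plan is a direct, polynomial-time reduction from \textsc{Weighted Skew Multicut} to \textsc{Weighted Directed Subset Feedback Edge/Vertex Set}, preserving both the cardinality parameter $k$ and the weight budget $W$; an application of Theorem~\ref{thm:dsfvs} then finishes the proof. The intuition is that the skew-multicut demand pattern $\{(s_i,t_j) \mid i \le j\}$ has a transitive, chain-like structure, so each forbidden pair can be encoded as a single ``feedback'' red arc closing its $s_i \rightsquigarrow t_j$ path into a red cycle.

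Concretely, given $(G,\terms,k,W,\weight)$ with $\terms = \{(s_i,t_j) \mid 1 \le i \le j \le \ell\}$, I construct $G'$ by adding, for every pair $(i,j)$ with $i \le j$, a new arc $e_{i,j}=(t_j,s_i)$, which is declared red; all original arcs remain non-red and keep their weights. For the edge-deletion variant I set $\weight'(e_{i,j}) = W+1$ so that no budget-$W$ solution can delete a red arc; for the vertex-deletion variant I instead set $\weight'(v)=W+1$ for every terminal $v \in \{s_1,\dots,s_\ell,t_1,\dots,t_\ell\}$, making terminals effectively undeletable. No new vertices are introduced, so the parameter $k$ is untouched and the instance size grows by only $\Oh(\ell^2)$ arcs. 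Correctness amounts to showing that a set $F$ (of non-red arcs, resp.\ non-terminal vertices) of weight at most $W$ is a skew multicut in $(G,\terms)$ if and only if it is a directed subset feedback edge/vertex set of the constructed instance. The ``if'' direction is immediate: a forbidden path $s_i \rightsquigarrow t_j$ in $G-F$ together with $e_{i,j}$ produces a red closed walk in $G'-F$, a contradiction.

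The main (and only) point needing care is the converse direction, where a single red closed walk may traverse several red arcs $e_{a_1,b_1},\dots,e_{a_m,b_m}$ in order, decomposing into $G$-walks from $s_{a_q}$ to $t_{b_{q+1}}$ (indices mod $m$). One has to exhibit at least one forbidden segment in such a decomposition: assuming for contradiction that every segment is allowed, i.e.\ $a_q > b_{q+1}$ for all $q$, and combining with the red-arc constraint $a_{q+1} \le b_{q+1}$ yields $a_q > a_{q+1}$ for all $q$, contradicting the cyclic wrap-around $a_m > b_1 \ge a_1$. This tiny index-chain argument is the one nontrivial step; everything else is bookkeeping. Applying Theorem~\ref{thm:dsfvs} in both variants yields randomized FPT algorithms parameterized by $k$ alone, strengthening the $k+\ell$ bound of~\cite{dfl}.
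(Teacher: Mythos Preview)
Your proposal is correct and follows essentially the same reduction as the paper: add an undeletable red arc $(t_j,s_i)$ for every $i \le j$ and invoke Theorem~\ref{thm:dsfvs}. The paper's proof merely asserts that any red closed walk contains a non-red subpath from some $s_i$ to $t_j$ with $i \le j$, whereas you spell out the index-chain argument that justifies this; your treatment is thus slightly more detailed but otherwise identical.
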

\begin{proof}
Let $(G,(s_i,t_i)_{i=1}^\ell,\weight,k,W)$ be a \textsc{Weighted Skew Multicut} instance (in the edge- or vertex-deletion setting) where $\weight$ is the weight function (on the edges or vertices respectively) and $W$ is the weight budget of the solution.
Construct a graph $G'$ and a set of red edges $R$ as follows: start with $G'=G$, $R =\emptyset$ and, for every $1 \leq i \leq j \leq \ell$, introduce a red edge $(t_j,s_i)$
and add it to $G'$ (in the edge-deletion setting, the new edge has weight $W+1$, that is, it is effectively undeletable).
It is easy to see that the resulting \textsc{Weighted Directed Subset Feedback Edge/Vertex Set} instance $(G',R,\weight,k,W)$ is equivalent to the input
\textsc{Weighted Skew Multicut} instance: any closed walk in $G'$ involving a red edge contains a subpath from $s_i$ to $t_j$ for some $1 \leq i \leq j \leq \ell$ without any red edge,
  and any path in $G$ from $s_i$ to $t_j$ for $1 \leq i \leq j \leq \ell$ closes up to a cycle with a red edge $(t_j,s_i)$ in $G'$.
\end{proof}

The running time bounds of all our algorithms are of the form $2^{\mathrm{poly}(k)} \mathrm{poly}(|V(G)|)$, where both polynomial dependencies have unspecified large degree coming from the
use of involved flow-augmentation-based algorithms of~\cite{dfl,dfl-csp}.

\paragraph{Organization.}
We introduce the necessary tools, in particular the used corollaries of the flow-augmentation technique, in Section~\ref{sec:prelims}.
Theorem~\ref{thm:multicut} is proven in Section~\ref{sec:multicut},
Theorem~\ref{thm:gfvs} is proven in Section~\ref{sec:gfvs},
and Theorem~\ref{thm:dsfvs} is proven in Section~\ref{sec:dsfvs}.
Section~\ref{sec:conc} concludes the paper and identifies \textsc{Directed Symmetric Multicut} as a next problem whose parameterized complexity remains open.

\section{Preliminaries}\label{sec:prelims}

\subsection{Edge- and vertex-deletion variants}

In directed graphs, there is a simple reduction from the vertex-deletion setting 
to the edge-deletion one: replace every vertex $v$ with two vertices $v^+$ and $v^-$
and an edge $(v^-,v^+)$; every previous arc $(u,v)$ becomes an arc $(u^+,v^-)$.
Now, the deletion of the vertex $v$ corresponds to the deletion of the arc $(v^-,v^+)$.
Hence, in Section~\ref{sec:dsfvs} we will consider only the edge-deletion variant,
that is, \textsc{Directed Subset Feedback Edge Set}.

No such simple reduction is available in undirected graphs and, in fact, 
in some cases the vertex-deletion variant turns out to be significantly more difficult
(cf. the \textsc{$k$-Way Cut} problem~\cite{rand-contr,KawarabayashiT11,marx:1}). 
In the presence of weights, there is a simple reduction from the edge-deletion variant
to the vertex-deletion variant: subdivide every edge with a new vertex that inherits the weight of the edge
it is placed on, and set the weight of the original vertices to $+\infty$, making them undeletable. 
(For clarity, we allow the weight function $\weight$ to attain the value $+\infty$, which is equivalent
to any weight larger than $W$ and models an undeletable edge or vertex.)
Thus, both in Section~\ref{sec:multicut} and in Section~\ref{sec:gfvs} we consider the vertex-deletion variants. 

\subsection{Iterative Compression}

All problems considered in this paper are monotone in the sense that deletion of an edge or a vertex from the input graph
cannot turn a \yes-instance into a \no-instance. 
This allows to use the standard technique of \emph{iterative compression}~\cite{ReedSV04}:
We enumerate $V(G) = \{v_1,v_2,\ldots,v_n\}$ for $n = |V(G)|$, denote $G_i = G[\{v_1,\ldots,v_i\}]$ for $0 \leq i \leq n$
and iteratively solve the problem on graphs $G_0$, $G_1$, \ldots, $G_n = G$. 
If the instance for $G_i$ turns out to be a \no-instance, we deduce that the input instance is a \no-instance, too.
Otherwise, the computed solution for $G_i$ allows us to infer a set $X' \subseteq V(G_i)$ of size at most $k$
such that in $G_i-X'$ already has the desired separation (i.e., induces a \yes-instance with parameter $k=0$).
We set $X = X' \cup \{v_{i+1}\}$ and observe that $G_{i+1}-X = G_i-X'$ and $|X| \leq k+1$.

Furthermore, in all considered problems, using self-reducibility
it is immediate to turn an algorithm that
only gives a yes/no answer into an algorithm that, in case of a positive answer, returns
a cutset that is a solution.

Hence, in all our algorithmic results, we can solve a compression version of the problem.
That is, we can assume that our algorithm is additionally given on input a set $X \subseteq V(G)$ of size at most $k+1$ such that $G-X$
already satisfies the desired separation (i.e., has no cycle with a red edge in case of \textsc{Subset Feedback Edge Set} etc.). 

Furthermore, in the problems that involve vertex deletions (i.e., Sections~\ref{sec:multicut} and~\ref{sec:gfvs}), we can additionally branch on the set $X$
into $2^{|X|}$ options, guessing a set $Y \subseteq X$ of vertices that are included in the sought solution. 
In each branch, we delete $Y$ from the graph and the set $X$, decrease $k$ by $|Y|$ and decrease $W$ by the weight of $Y$. 
Furthemore, we set the weight of the remaining vertices of $X \setminus Y$ to $+\infty$, so they become undeletable. 
In other words, in Sections~\ref{sec:multicut} and~\ref{sec:gfvs} we solve a disjoint compression variant of the problem, where the sought solution is 
supposed to be disjoint with the set $X$.

\subsection{Generalized Digraph Pair Cut}\label{prelims:gdpc}

We will not need flow-augmentation in its raw form, but only one algorithmic corollary
of this technique.

An instance of \textsc{Generalized Digraph Pair Cut} (\gdpc{} for short) consists of:
\begin{itemize}
\item a directed multigraph $G$ with two distinguished vertices $s,t \in V(G)$;
\item a multiset $\pairs$ of (unordered) pairs of vertices of $G$, called \emph{clauses};
\item a family $\bundles$ of pairwise disjoint subsets of $E(G) \cup \pairs$
called \emph{bundles} such that no bundle contains two copies of the same arc or two
copies of the same pair;
\item a weight function $\weight : \bundles \to \mathbb{Z}_+$;
\item two integers $k$ and $W$.
\end{itemize}
A set $Z \subseteq E(G)$ is a \emph{cut} in a \gdpc{} instance
$\inst = (G,s,t,\pairs,\bundles,\weight,k,W)$ if
$Z \subseteq E(G) \cap \bigcup_{B \in \bundles} B$ (i.e., $Z$ contains only edges of bundles)
and there is no path from $s$ to $t$ in $G-Z$. 
A cut $Z$ \emph{violates} an edge $e \in E(G)$ if $e \in Z$ and
\emph{violates} a clause $uv \in \pairs$ if both $u$ and $v$ are reachable from $s$ in $G-Z$.
A bundle is \emph{violated} by $Z$ if it contains an edge or a clause violated by $Z$.
An edge, a clause, or a bundle not violated by $Z$ is \emph{satisfied} by $Z$.
A cut $Z$ is a \emph{solution} if every clause violated by $Z$ is part of a bundle,
$Z$ violates at most $k$ bundles, and the total weight
of violated bundles is at most $W$. (Recall that a cut is required to contain only edges of bundles, that is, it satisfies all edges outside bundles.)
The \gdpc{} problem asks for an existence of a solution.

\gdpc{}, parameterized by $k$, is W[1]-hard even in the unweighted setting
and without clauses: it suffices to have bundles consisting of two edges for the hardness~\cite{MarxR09}.
However, flow-augmentation yields fixed-parameter tractability of some specific useful restrictions
of \gdpc{}. 

For a bundle $B \in \bundles$, let $V(B)$ be the set of vertices that are involved 
in an arc or a clause of $B$
and let $G_B$ be an undirected graph with $V(G_B) = V(B) \setminus \{s,t\}$ and $uv \in E(G_B)$
if $B$ contains an arc $(u,v)$, an arc $(v,u)$, or a clause $uv$.
A bundle $B$ is $2K_2$-free if $G_B$ is $2K_2$-free, that is, it does not contain $2K_2$
(the four-vertex graph consisting of two independent edges) as an induced subgraph. 
An instance $\inst$ of \gdpc{} is $2K_2$-free if every bundle of $\inst$
is $2K_2$-free.
Finally, an instance $\inst$ is $\maxarity$-bounded if for every $B \in \bundles$
we have $|V(B)| \leq \maxarity$.

One of the main algorithmic corollaries of the flow-augmentation technique is 
the tractability of $2K_2$-free $\maxarity$-bounded instances of \gdpc{}.

\begin{theorem}[\cite{dfl-csp}, Theorem~3.3]\label{thm:gdpc:2k2}
There exists a randomized polynomial-time algorithm for \textsc{Generalized Digraph Pair Cut}
restricted to $2K_2$-free $\maxarity$-bounded instances that never accepts a \no-instance
and accepts a \yes-instance with probability $2^{-\mathrm{poly}(k,\maxarity)}$. 
\end{theorem}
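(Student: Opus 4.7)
The overall plan is to invoke flow-augmentation modularly: after augmenting the digraph $G$ so that the sought solution $Z$ becomes a minimum $s$--$t$ cut of an augmented graph $G'$, the set of vertices reachable from $s$ in $G' - Z$ lies in the polynomially computable distributive lattice of down-closed sets of the residual DAG, and the remaining task reduces to a weighted constraint-satisfaction problem on this lattice which the $2K_2$-free, $\maxarity$-bounded structure of the bundles should render polynomial-time solvable.

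First I would do the standard preparation: iterative compression with self-reduction yields a disjoint-compression setting where we hold an auxiliary cut $X \subseteq E(G)$ of size $k+1$ hitting every $s$--$t$ path and the sought $Z$ must avoid $X$; additionally I would guess, at the cost of a factor $k^{\Oh(k)}$, which bundles are hit by a cut arc. Next I would invoke flow-augmentation on the underlying min-cut problem (treating as deletable only those arcs that lie in some bundle), obtaining with probability $2^{-\mathrm{poly}(k)}$ an augmented graph $G' = G \cup A$ in which $Z$ is a minimum $s$--$t$ cut. Because augmentation only adds arcs, reachability and violation status for bundles agree between $G - Z$ and $G' - Z$. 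I would then compute a max $s$--$t$ flow, contract the strongly connected components of the residual graph to obtain a DAG $D$ of size polynomial in $k$, whose down-closed sets are in bijection with the minimum $s$--$t$ cuts of $G'$. Guessing $Z$ thus reduces to choosing a down-closed set $R$ of $D$.

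The cost of choosing $R$ is the weight of all bundles containing a frontier arc of $R$ or a clause $uv$ with $u,v \in R$, plus the feasibility check that every violated clause is covered by some bundle. I would evaluate this by a dynamic program that sweeps $D$ in a topological order of its SCCs and, for every ``still open'' bundle $B$, remembers the intersection of the committed part of $R$ with $V(B)$. The bound $|V(B)| \le \maxarity$ already caps each bundle's state at $2^{\maxarity}$, and the $2K_2$-freeness of $G_B$ should further shrink the set of ``sweep patterns'' realizable by $V(B)$ against a linear order of the residual DAG to a polynomial family of threshold-like choices. Combining these with a bound on the number of bundles simultaneously open at any moment of the sweep would give the desired polynomial-time algorithm on the augmented instance.

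The main obstacle I anticipate is exactly this last step: rigorously proving that $2K_2$-freeness of $G_B$ collapses the traces $R \cap V(B)$, taken over all down-closed $R$ of the augmented residual DAG, into a structured polynomial-sized family that the DP can iterate over. Flow-augmentation a priori guarantees nothing about the induced structure on $V(B)$, and the added arcs in $A$ could reshape reachability within $V(B)$; circumventing this likely requires either a mild strengthening of flow-augmentation that preserves the $V(B)$-induced structure, or a direct combinatorial classification of which subsets of a $2K_2$-free graph on at most $\maxarity$ vertices can arise as source-side restrictions of a minimum cut. With that hurdle cleared, the remaining bookkeeping (bounding the number of simultaneously open bundles, handling the $k^{\Oh(k)}$-factor from the initial guess, and the final runtime estimate) should be routine.
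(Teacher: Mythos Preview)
This theorem is not proved in the present paper at all: it is quoted verbatim as Theorem~3.3 of~\cite{dfl-csp} and used as a black box. There is therefore no ``paper's own proof'' to compare your proposal against here; the actual argument lives entirely in the cited reference on flow-augmentation for Boolean Min CSP.

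As for the proposal itself, it is a high-level sketch rather than a proof, and you yourself flag the crucial gap. The real work in~\cite{dfl-csp} is precisely what you identify as the obstacle: controlling how the reachable side of a minimum cut can intersect each bundle, and showing that the $2K_2$-free, $\maxarity$-bounded hypothesis keeps the resulting combinatorial problem tractable. Your outline of ``flow-augment, then DP over down-closed sets of the residual DAG with per-bundle state'' is in the right spirit, but the missing piece---why $2K_2$-freeness collapses the per-bundle state space and why only polynomially many bundles are ``open'' at once---is the entire substance of the theorem. Without that, the sketch does not constitute a proof; it is a plausible plan with the hard lemma left as a conjecture. If you want to actually prove this statement, you would need to work through the CSP-side machinery of~\cite{dfl-csp} (or the earlier undirected version in~\cite{ufl}), where the $2K_2$-free condition is exploited via a structural classification of the relevant constraint languages rather than via an ad hoc DP on the residual DAG.
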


For \textsc{Directed Subset Feedback Edge Set} it will be more convenient to look at a
different restriction of \gdpc{}.
Let $\inst = (G,s,t,\emptyset,\bundles,\weight,k,W)$ be a \gdpc{}
instance without clauses.
An arc $e \in E(G)$ is \emph{crisp} if it is not contained in any bundle, and \emph{soft}
otherwise.
An arc $e \in E(G)$ is \emph{deletable} if it is soft and there is no copy of $e$ in
$G$ that is crisp.
Note that a cut needs to contain soft arcs only and in fact we can restrict our attention
to cuts containing only deletable arcs. 
A bundle $B \in \bundles$ has \emph{pairwise linked deletable edges} if for every two deletable
arcs $e_1,e_2 \in B$ that are not incident with either $s$ or $t$, there is a path from
an endpoint of one of the edges to an endpoint of the other that does not use
an edge of another bundle (i.e., uses only edges of $B$ and crisp edges).

In~\cite{dfl}, a notion of \textsc{Bundled Cut with Order} has been introduced
as one variant of \gdpc{} without clauses that is tractable.
In~\cite{dfl-csp}, it was observed that the notion of \emph{pairwise linked deletable edges}
is slightly more general than the ``with order'' assumption and is more handy. 
\begin{theorem}[\cite{dfl-csp}, Theorem~3.21]\label{thm:gdpc:linked}
There exists a randomized polynomial-time algorithm that, given a
\gdpc{} instance $\inst = (G,s,t,\emptyset,\bundles,\weight,k,W)$ with no clauses
and whose every bundle has pairwise linked deletable edges, never accepts a \no-instance
and accepts a \yes-instance with probability $2^{-\Oh(k^4 d^4 \log(kd))}$ where
$d$ is the maximum number of deletable arcs in a single bundle.
\end{theorem}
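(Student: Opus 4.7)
The strategy is to combine directed flow-augmentation with the structure of minimum $s$--$t$ cuts in the augmented graph, using the pairwise-linked property to control how a minimum cut intersects each bundle. A target solution $Z$ consists of deletable arcs drawn from at most $k$ violated bundles, each contributing at most $d$ arcs, so $|Z| \le kd$. Applying the directed flow-augmentation framework of~\cite{dfl} with target cut size $\lambda \le kd$ augments $G$ with a randomly sampled packing of ``correcting'' crisp arcs so that, with probability matching the claimed bound $2^{-\Oh(k^4 d^4 \log(kd))}$, the set $Z$ becomes a minimum $s$--$t$ cut of the resulting graph $G'$. To guarantee that every such minimum cut uses only deletable arcs, every crisp arc is first assigned capacity $+\infty$ (equivalently, contracted) before augmentation. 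We condition on the successful event and proceed deterministically from there.

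After conditioning, the minimum $s$--$t$ cuts of $G'$ are in bijection with source-closed sets $R \subseteq V(G')$ with $s \in R$ and $t \notin R$; these sets form a distributive lattice that is computable in polynomial time via a Dulmage--Mendelsohn-style decomposition. Fix a bundle $B$ and two deletable arcs $e_1, e_2 \in B$ not incident to $s$ or $t$. The pairwise-linked hypothesis supplies a path $P$ in $G$ from an endpoint of $e_1$ to an endpoint of $e_2$ using only crisp arcs and arcs of $B$. Since crisp arcs carry infinite capacity, they cannot lie on $\delta(R)$, so every crossing of $P$ between $R$ and its complement must occur at a deletable arc of $B$. This localisation forces the deletable arcs of $B$ sitting on $\delta(R)$ to form a configuration whose description needs only $\Oh(d)$ bits, indexed by at most $2^d$ possibilities per bundle.

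With this structural control, I would adapt the algorithmic scheme of~\cite{dfl} for Bundled Cut with Order: traverse the min-cut lattice of $G'$ in an order compatible with the bundle decomposition and, at each bundle, branch on which subset of its at most $d$ deletable arcs is cut, debiting the corresponding contribution from the budget $W$ whenever the bundle becomes violated. Only $k$ bundles are ever violated and each contributes at most $2^d$ subsets, so the whole enumeration runs in polynomial time and the overall success probability is dominated by the flow-augmentation step. The main obstacle is designing the lattice traversal so that the branching at one bundle cannot invalidate decisions made at previously processed bundles: the pairwise-linked property is precisely what replaces the explicit linear order used in the original Bundled Cut with Order algorithm by an implicit order coming from the internal connectivity of each bundle, and establishing that this implicit order really does lift to a well-defined traversal of the min-cut lattice — robust to the crisp arcs added by augmentation — is the key technical step that makes this more general hypothesis sufficient for a flow-augmentation proof.
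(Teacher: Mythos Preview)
The paper does not prove this theorem. Theorem~\ref{thm:gdpc:linked} is stated in the preliminaries as a black-box import from~\cite{dfl-csp} (their Theorem~3.21), with no proof or proof sketch given here; the present paper only \emph{uses} it in Section~\ref{sec:dsfvs}. So there is no ``paper's own proof'' to compare your proposal against.

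As a standalone sketch, your outline captures the right top-level idea---apply directed flow-augmentation with parameter $\lambda\le kd$ so that the target $Z$ becomes a minimum $st$-cut with probability $2^{-\Oh((kd)^4\log(kd))}$, then exploit the min-cut lattice and the pairwise-linked hypothesis to localise how any minimum cut meets each bundle---but there is a genuine gap in the last paragraph. You write that at each of the at most $k$ violated bundles you ``branch on which subset of its at most $d$ deletable arcs is cut'' and conclude that ``the whole enumeration runs in polynomial time.'' That enumeration has up to $2^{kd}$ leaves, which is not polynomial; the theorem promises a randomized \emph{polynomial-time} algorithm, with all superpolynomial factors pushed into the success probability. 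If you intend random guessing rather than exhaustive branching, say so and fold the extra $2^{-\Oh(kd)}$ factor into the probability bound (it is indeed dominated by the flow-augmentation term). More substantively, the step you flag yourself---showing that the pairwise-linked property induces an order on bundles that is compatible with a traversal of the min-cut lattice, so that decisions at one bundle do not invalidate earlier ones---is the heart of the argument and is left as an assertion; without it, the sketch does not yet constitute a proof.
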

Note that if $\inst$ is $\maxarity$-bounded, then $d \leq \maxarity^2$.

\section{Multicut}\label{sec:multicut}

This section is devoted to the proof of Theorem~\ref{thm:multicut}.

As discussed in Section~\ref{sec:prelims}, we can restrict ourselves to the vertex-deletion
variant. 
 Let $\inst = (G, \terms, \weight, k,W)$ be an instance of \textsc{Weighted Multicut}.
Let $T = \bigcup_{(s,t) \in \terms} \{s,t\}$ be the set of all terminals.
By a simple reduction, we can assume that all terminals have weight $+\infty$
and form an independent set: 
for every $(s,t) \in \terms$, add a new vertex $s'$ adjacent to $s$,
add a new vertex $t'$ adjacent to $t$, set $\weight(s') = \weight(t') = +\infty$
and replace $(s,t)$ with $(s',t')$ in $\terms$. 

We also use iterative compression, but in the ordering $v_1,\ldots,v_n$ of $V(G)$
we start with terminals. Note that the subgraph of $G$ induced by the terminals is edgeless
and thus admits a solution being the emptyset.
As a result, using standard iterative compression step discussed in Section~\ref{sec:prelims}
we can assume that the algorithm is given access to a set
$X \subseteq V(G) \setminus T$ of size $|X| \leq k+1$ 
such that for every $(s,t) \in \terms$ there is no path from $s$ to $t$ in $G-X$
and we are to check if there is a solution disjoint with $X$.
We can set $\weight(x) = +\infty$ for every $x \in X$. 

We closely follow the steps in Section~5 of~\cite{marx:multicut:stoc}, reengineering only
one branching step that originally uses important separators.

Fix a hypothetical solution $Z$. We first guess how the vertices of $X$ are partitioned
between connected components of $G-Z$. This results in $2^{\Oh(k \log k)}$ subcases.
If two vertices of $X$ are guessed to be in the same connected component of $G-Z$, we can 
merge them into a single vertex (recall that the solution $Z$ is disjoint with $X$). 
After this step, we can assume that every connected component of $G-Z$ contains at most
one vertex of $X$
and $X$ is an independent set. 
For brevity, we say that $Y \subseteq V(G) \setminus (X \cup T)$ is a \emph{multiway cut}
if every connected component of $G-Y$ contains at most one vertex of $X$.
Thus, it suffices to develop a randomized FPT algorithm that
(a) accepts with constant probability an instance that admits a solution that is a multiway cut;
(b) never accepts a \no-instance.

An instance is \emph{bipedal} if $X$ is an independent set and for every connected component
$C$ of $G-X$, we have $|N_G(C)| \leq 2$, that is, $C$ is adjacent to at most two vertices
of $X$. 
In Section~\ref{ss:multicut:bipedal} we show how to reduce a bipedal instance
to a \gdpc{} instance handled by Theorem~\ref{thm:gdpc:2k2}. 
We emphasize that we do not claim authorship of this reduction:
while there is no citeable source of this reduction, it has been floating around
in the community in the last years. 
The reduction, in the edge-deletion setting (and leading to an undirected analog
 of \gdpc{}) has been spelled out in~\cite{ufl-arxiv}. 
We include it here for completeness of the argument.

Section~\ref{ss:multicut:branch} describes a branching algorithm, closely following
the arguments of~\cite{marx:multicut:stoc}, whose goal is to break
connected components $C$ of $G-X$ with $|N_G(C)| > 2$. In the leaves
of the branching process we obtain bipedal instances that are passed
to the algorithm of Section~\ref{ss:multicut:bipedal}. 

\subsection{Branching on a multilegged component}\label{ss:multicut:branch}

The algorithm is a recursive branching routine
on an instance $(G,\terms,\weight,k,W,X)$ 
where $X$ is an independent set and a multicut for $\terms$,
and the hypothetical solution is also a multiway cut for $X$.
In the beginning $|X| \leq k+1$ as discussed earlier. During the branching algorithm one may delete vertices, merge vertices or grow the set $X$ while maintaining that the hypothetical solution is also a multiway cut for (the new) $X$.

In a recursive call, we start
with a few cleaning steps. At every moment, apply the first applicable reduction step.

\begin{enumerate}
\item \label{rr:1} If $\emptyset$ is a solution, return \yes.
\item If $k \leq 0$, $W \leq 0$, or $X$ is not an independent set, return \no.
\item If the number of connected components $C$ of $G-X$ with $|N_G(C)| > 1$ is more than
$k$, return \no. (Note that every such component needs to contain at least one vertex
    of every multiway cut.)
\item \label{rr:4} If there exists $x \in X$ such that the cardinality of the minimum-cardinality
 vertex cut between $x$ and $X \setminus x$ is of size larger than $k$, return \no.
 (Recall that the solution is also a multiway cut for $X$.)
\item \label{rr:5} If there exists a vertex $v$ that admits a family $\mathcal{P}$ of $k+2$ paths
  that start in $v$, end in distinct vertices of $X$, and are vertex-disjoint except for $v$,
  delete $v$, decrease $k$ by one, decrease $W$ by $\weight(v)$, and recurse.
  (Note that every such vertex $v$ needs to be included in any multiway cut
   of size at most $k$.)
\item   \label{rr:6}
  If there exists a connected component $C$ of $G$
 that do not contain both vertices of any terminal pair $(s,t) \in \terms$ and contains at most one vertex of $X$, 
  delete it and all terminal pairs involving a vertex of $C$.
  (Recall that for every $(s,t) \in \terms$, the terminals $s$ and $t$ lie in different
   connected components of $G-X$. Hence, this rule applies to any component $C$ that contains no vertex of $X$
   and to any isolated vertex of $X$.)

     \item \label{rr:size-of-X} 
   If $|X| > k(k+1)$, return \no.  (Since the previous reduction rule is inapplicable, for every multiway cut $Z$,
  every $z \in Z$ is adjacent to at most $k+1$ connected components of $G-Z$ that contain
  a vertex of $X$. Also, since $Z$ is a multiway cut for $X$, every vertex of $X$ is in a distinct connected component of $G-Z$.
Further, since the previous rule is not applicable, there does not exist a connected component of $G-Z$ that has no neighbour in $Z$. Indeed, 
as such an isolated component will have at most one vertex from each terminal pair in $\terms$ because $Z$ is a solution and at most one vertex of $X$ since $Z$ is a multiway cut of $Z$.
Therefore, $|X|$ is at most the number of connected components of $G-Z$ that intersect $X$, which is upper bounded by $|Z| (k+1) \leq k(k+1)$.)
\item If the current instance is bipedal, pass it to the algorithm of Section~\ref{ss:multicut:bipedal}. 
\end{enumerate}

A component $C$ of $G-X$ is \emph{nontrivial} if $|N_G(C)| > 1$. 
If neither of the reduction steps is applicable, we have at most $k$ nontrivial connected components and the size of the neighbourhood of each component of $G-X$ is at most $|X| \leq k(k+1)$.

At every branching step, we will ensure that one of the following progresses happen
in any recursive call:
\begin{itemize}
\item the instance will be resolved immediately by reduction rules~\ref{rr:1}-\ref{rr:4} or~\ref{rr:size-of-X}, or 
\item the parameter $k$ decreases, or
\item the parameter $k$ stays the same, but the number of nontrivial connected components plus the number vertices of $X$ adjacent to a nontrivial component increases.
\end{itemize}
We observe that the reduction rules do not reverse the above progress.
That is, Rule~\ref{rr:5} can decrease the number of nontrivial connected components or the number of vertices of $X$ incident with a nontrivial connected component, but at the same time decreases $k$ by one,
while Rule~\ref{rr:6} cannot delete a nontrivial connected component.

After the application of the described reduction rules, the number of non-trivial components is at most $k$ and the size of $X$ is at most $k(k+1)$. Thus, the depth of the recursion is bounded by $\Oh(k^3)$. 

Let $C$ be a component of $G-X$ with $|N_G(C)| > 2$. (It exists as the instance is not bipedal.)
For a subset $B \subseteq C$ and a function $f : B \to N_G(C)$,
we construct an instance $\inst_f$ as follows: for every $v \in B$, we merge $v$
onto the vertex $f(v)$ (we use $f(v)$ as the name of the resulting vertex and the
resulting vertex still belongs to $X$).
We say that $B$ is a \emph{shattering set} if for every $f : B \to N_G(C)$, the instance
$\inst_f$ either contains strictly more nontrivial components than the current instance,
or recursing on $\inst_f$ will result in returning an immediate answer by one
of the first four reduction rules. 

The main technical contribution of Section~5 of~\cite{marx:multicut:stoc}
is the following statement.
\begin{lemma}\label{lem:multicut:shatter}
Given an instance $(G,\terms,\weight,k,W)$ together with a set $X \subseteq V(G) \setminus T$ such that
in $G-X$ there is no path from $s$ to $t$ for any $(s,t) \in \terms$,
   and a component $C$ of $G-X$ with $|N_G(C)| > 2$, one can find a shattering set $B \subseteq C$
   of size at most $3k$ in polynomial time.
\end{lemma}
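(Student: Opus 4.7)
The plan is to follow the construction of Marx and Razgon: produce $B$ as the union of three minimum vertex separators inside $C$, each pushing one distinguished terminal away from the rest. Concretely, since $|N_G(C)| > 2$, fix three distinct vertices $x_1, x_2, x_3 \in N_G(C)$, and for every $i \in \{1,2,3\}$ let $B_i \subseteq C$ be a minimum vertex subset of $C$ whose removal in $G[C \cup N_G(C)]$ disconnects $x_i$ from $N_G(C) \setminus \{x_i\}$. Each $B_i$ is computable by a single max-flow invocation. To bound $|B_i|$ by $k$, observe that the hypothetical solution $Z$ is disjoint from $X$ and separates every pair of distinct vertices of $X$ in $G$; in particular, $Z \cap C$ must itself be a legal $(x_i, N_G(C) \setminus \{x_i\})$-separator inside $G[C \cup N_G(C)]$, because any $x_i$-to-$x_j$ path through $C$ avoiding $Z$ would put two vertices of $X$ in the same component of $G-Z$. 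Minimality of $B_i$ then gives $|B_i| \leq |Z \cap C| \leq k$, so $B := B_1 \cup B_2 \cup B_3$ has size at most $3k$.

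To verify the shattering property, I would fix an arbitrary $f : B \to N_G(C)$ and examine the merged instance $\inst_f$. Using the standard max-flow structure, let $R_i \subseteq C \cup N_G(C)$ denote the set of vertices reachable from $x_i$ in $G[C \cup N_G(C)] - B_i$; then $R_i \cap N_G(C) = \{x_i\}$ and every edge leaving $R_i$ has its other endpoint in $B_i$. Consequently each connected component $C'$ of $G[C \setminus B]$ lies either entirely inside $R_i$ or entirely outside, giving $C'$ a type $T(C') \subseteq \{1,2,3\}$. When $T(C') \neq \emptyset$, the original $N_G(C)$-neighbours of $C'$ are confined to $\bigcap_{i \in T(C')} \{x_i\}$, a set of size at most one; in $\inst_f$ this neighbourhood is augmented by $\{f(v) : v \in B \cap N_G(C')\}$. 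Since merging does not touch $C \setminus B$, we have $G'[C \setminus B] = G[C \setminus B]$, so the type-based description survives verbatim in $\inst_f$.

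I would then split the analysis into two regimes. If $f$ identifies two vertices of $X$, i.e., the merged graph contains an edge with both endpoints in $X$ (this happens for instance when two $B$-endpoints of a single $C$-edge are sent to distinct $x$'s, or when a $B$-vertex adjacent to some $x \in N_G(C)$ is mapped to another $x' \neq x$), then $X$ is no longer independent and the cleaning rules~\ref{rr:1}--\ref{rr:4} resolve $\inst_f$ at the very next recursive call. Otherwise, using the type structure and the fact that all three of $x_1, x_2, x_3$ are still distinct vertices of $N_G(C)$, at least two components $C'$ of $G'[C \setminus B]$ must satisfy $|N_{G'}(C') \cap X| \geq 2$, each contributing a nontrivial component to $\inst_f$ and strictly exceeding the single nontrivial component that $C$ contributed in the input.

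The main obstacle will be the bookkeeping in the second regime, in particular handling components with $T(C') \in \{\emptyset,\{1,2,3\}\}$: those entirely ``interior'' to $C$ and receiving all their $X$-neighbours from the merging, and those that are outside every $R_i$ so that every $x_i$ could, a priori, become a neighbour through $f$. Here one leans on the cleaning rules already enforced before branching, most crucially rule~\ref{rr:5} (no vertex of $C$ has $k+2$ internally vertex-disjoint paths to distinct elements of $X$) and rule~\ref{rr:6} (no component of $G-X$ is terminal-free), to rule out the degenerate collapses in which a merge would annihilate $C$ without producing any new nontrivial components. Once these configurations are excluded, the three separators $B_1, B_2, B_3$ force any assignment $f$ either to break the independence of $X$ or to split $C$ into at least two distinct multi-terminal pieces, which is exactly the required progress.
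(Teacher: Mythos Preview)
The paper does not prove this lemma at all: it is imported verbatim from Section~5 of the conference version of Marx and Razgon~\cite{marx:multicut:stoc}, so there is no in-paper argument to compare against. Your construction (fix three legs $x_1,x_2,x_3$, take $B_i$ a minimum $(x_i,\,N_G(C)\setminus\{x_i\})$-separator inside $G[C\cup N_G(C)]$, set $B=B_1\cup B_2\cup B_3$) is exactly the Marx--Razgon construction that the paper is citing, so in spirit you are reproducing the intended proof rather than proposing an alternative.

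Two remarks on the sketch itself. First, your size bound $|B_i|\le k$ is argued via the hypothetical multiway-cut solution $Z$, but the lemma as stated makes no assumption that such a $Z$ exists; the clean justification is that Rule~\ref{rr:4} has already failed to trigger (the lemma is only ever invoked after the cleaning rules), so the minimum $(x_i,X\setminus\{x_i\})$-cut in $G$ has size at most $k$, and its restriction to $C$ is an $(x_i,N_G(C)\setminus\{x_i\})$-separator in $G[C\cup N_G(C)]$ because $X$ is independent. Second, the shattering verification is where all the work lies, and --- as you yourself flag --- your write-up stops short of it: the claim that ``at least two components $C'$ of $G'[C\setminus B]$ must satisfy $|N_{G'}(C')\cap X|\ge 2$'' is asserted but not established, and the cases $T(C')=\emptyset$ and $T(C')=\{1,2,3\}$ are left as ``bookkeeping''. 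In the Marx--Razgon argument this case analysis is the actual content of the proof (and Rules~\ref{rr:5} and~\ref{rr:6} are not what drives it; the key is Menger-type counting on the three reachability regions $R_1,R_2,R_3$). So the plan is correct and matches the cited source, but what you have written is the setup, not yet the proof.
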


We apply Lemma~\ref{lem:multicut:shatter} to $C$, obtaining a set $B$ of size at most $3k$.
We branch, guessing the first of the following options that happens
with regards to a hypothetical solution $Z$:
\begin{enumerate}
\item There is a vertex $v \in B \cap Z$. We guess $v$,
  delete $v$ from the graph, decrease $k$ by one,
  decrease $W$ by $\weight(v)$, and recurse. 
  This gives $|B| \leq 3k$ subcases and in each subcase $k$ drops.
\item For every $v \in B$, the connected component of $G-Z$ that contains $v$ also contains a vertex
of $X$. 
For every $v \in B$, we guess a vertex $f(v) \in N_G(C)$ that is in the same connected component
of $G-Z$ as $v$. As $|N_G(C)| \leq |X| \leq k(k+1)$ and $|B| \leq 3k$, there are $2^{\Oh(k \log k)}$
options for $f : B \to N_G(C)$.
We recurse on $\inst_f$.
To see that we obtain progress, observe that:
\begin{itemize}
\item the parameter $k$ stays the same;
\item if $X$ is not an independent set, the recursive call returns \no{} immediately;
\item otherwise,  
  the fact that $B$ is a shattering set implies that in each instance $\inst_f$ the number of non-trivial components increases,
  while the connectivity of $C$ implies that every vertex of $N_G(C)$ remains adjacent to a nontrivial connected component, 
  so the set of vertices of $X$ adjacent to a nontrivial connected component does not change. 
\end{itemize}
\item There exists $v \in B$ such that the connected component of $G-Z$ that contains $v$
is disjoint with $X$.
Here, \cite{marx:multicut:stoc} branches on an important separator separating $v$ from $X$. 
This does not work in the presence of weights, so we need to proceed differently.
We insert $v$ into $X$, set its weight to $+\infty$, and recurse. 
Clearly, the hypothetical solution $Z$ remains a solution and, if the guess is correct,
$Z$ remains a multiway cut (with regards to the enlarged set $X$).
To see that we obtain progress, observe that:
\begin{itemize}
\item the parameter $k$ stays the same;
\item if $X$ is not an independent set, the recursive call returns \no{} immediately;
\item otherwise, first observe that in the right guess $v$ has no neighbors in the set $X$; therefore, for every $y \in N_G(C)$, there exists a connected component $C_y$ of $C-\{v\}$
with $y \in N_G(C_y)$ and as $v \in N_G(C_y)$ due to connectivity of $C$, $C_y$ is a new nontrivial component;
hence the number of vertices of $X$ that are incident with a nontrivial connected component
increases as both $v$ and the whole $N_G(C)$ are now adjacent to nontrivial connected components; 
furthermore, the number of nontrivial connected components does not decrease as at least one new nontrivial component is created in the place of $C$ since $N_G(C) \neq \emptyset$.
\end{itemize}
\end{enumerate}

Hence, the recursive step invokes $2^{\Oh(k \log k)}$ recursive subcalls, in each
obtaining the promised progress. Every single recursive call takes polynomial time.
Consequently, the branching algorithm takes $2^{\Oh(k^4\log k)} n^{\Oh(1)}$ time
and results in $2^{\Oh(k^4 \log k)}$ leaves of the recursion trees
that give either an immediate answer or a bipedal instance, which is passed to Section~\ref{ss:multicut:bipedal}. 

\subsection{Solving a bipedal instance}\label{ss:multicut:bipedal}

We now show how to reduce a bipedal instance to a \gdpc{} instance
where every bundle consists of at most two arcs and a single clause containing the heads
of these two arcs. These bundles are $2K_2$-free and $4$-bounded and hence can be solved by Theorem~\ref{thm:gdpc:2k2}
in randomized FPT time $2^{k^{\Oh(1)}} n^{\Oh(1)}$. 
This is essentially repeating the arguments of Lemma~7.1 of~\cite{ufl-arxiv},
adjusted for the vertex-deletion setting and \gdpc{}.

We start with a graph $H$ consisting of vertices $s$ and $t$.
For every component $C$ of $G-X$, proceed as follows.
Recall that $|N_G(C)| \in \{1,2\}$. 
Denote one of the elements of $N_G(C)$ as $s_C$ and the other as $t_C$, if present. 
For every $v \in V(G)$, create four vertices $v_s^+$, $v_s^-$, $v_t^+$, $v_t^-$,
arcs $(v_s^-,v_s^+)$, $(v_t^-, v_t^+)$, and a clause $v_s^+v_t^+$.
The two constructed arcs and the constructed clause form a bundle $B_v$
of weight $\weight(v)$. 
These are all the bundles that we will construct; 
all subsequent arcs and clauses will not be in any bundle and thus will be undeletable. 
For every connected component $C$ of $G-X$ and $uv \in E(G[C])$, add arcs
$(u_s^+, v_s^-)$,
$(v_s^+, u_s^-)$
$(u_t^+, v_t^-)$, and
$(v_t^+, u_t^-)$.
For every $vs_C \in E(G)$ with $v \in C$, add arcs
$(s,v_s^-)$ and $(v_t^+,t)$.
For every $vt_C \in E(G)$ with $v \in C$, add arcs
$(s,v_t^-)$ and $(v_s^+,t)$.

Finally, for every $(u,v) \in \terms$ we proceed as follows. 
Note that $u$ and $v$ are in distinct connected components of $G-X$,
     say $C_u$ and $C_v$.
For every $x \in N_G(C_u) \cap N_G(C_v)$ we proceed as follows.
Say $x = \alpha_{C_u}$ and $x = \beta_{C_v}$
 for $\alpha,\beta \in \{s,t\}$. 
Add a clause $u^-_\alpha v^-_\beta$. 
This finishes the description of the \gdpc{} instance 
$\inst' = (H,s,t,\pairs,\bundles,\weight,k,W)$. 
It is immediate that the instance satisfies the prerequisities of 
Theorem~\ref{thm:gdpc:2k2} with $\maxarity=4$. 

It remains to check the equivalence of the instance $\inst'$ of \gdpc{}  with the input 
instance $\inst = (G,\terms,\weight,k,W)$ together with the set $X$. We do it in the next two lemmata,
completing the proof of Theorem~\ref{thm:multicut}. Recall $T$ is the set of all terminal vertices.

\begin{lemma}\label{lem:multicut:forward}
If $Z \subseteq V(G) \setminus (X \cup T)$ is a solution that is also a multiway cut for $X$,
then $Z' = \bigcup_{v \in Z} B_v \cap E(H)$ is a cut in $\inst'$
that satisfies all clauses outside $B_v$ for $v \in Z$. 
\end{lemma}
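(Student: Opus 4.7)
My plan is to verify the two conditions required of $Z'$: that it is an $s$--$t$ cut consisting of bundle arcs, and that every clause outside $\{B_v : v \in Z\}$ has at least one endpoint unreachable from $s$ in $H - Z'$. The containment $Z' \subseteq \bigcup_{B \in \bundles} B$ is immediate, so the real work is in the two reachability statements, both of which I would derive from a single dictionary: in $H - Z'$, a vertex $v_\alpha^{+}$ (or $v_\alpha^{-}$, provided $v \notin Z$) is reachable from $s$ if and only if $v$ is connected in $G - Z$ to the leg $\alpha_C$ of its component $C$ in $G-X$.

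To set up this dictionary I would exploit that arcs internal to a component gadget do not mix the ``$s$-layer'' (vertices $v_s^\pm$) with the ``$t$-layer''; the only inter-layer links in $H$ are the clauses $v_s^+ v_t^+$, which do not carry reachability. Any directed walk from $s$ in $H$ therefore enters exactly one layer of exactly one component $C$ via some arc $(s, w_{0,\alpha}^-)$ with $w_0$ adjacent to $\alpha_C$ in $G$, then proceeds by alternating bundle arcs $(w_{i,\alpha}^-, w_{i,\alpha}^+)$ and edge arcs $(w_{i,\alpha}^+, w_{i+1,\alpha}^-)$ inside $C$. Each bundle arc traversed forces the corresponding vertex of $C$ to lie outside $Z$, so the vertex-sequence $w_0, w_1, \ldots$ is a path in $C - Z$ starting from a neighbour of $\alpha_C$; the converse direction is equally direct. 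Here I use that $X \cup T$ is disjoint from $Z$ (terminals and vertices of $X$ have weight $+\infty$) when reassembling such a path into one in $G - Z$ that actually reaches $\alpha_C$ or a terminal.

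With the dictionary in hand, the remaining obligations collapse to the same argument. For the $s$--$t$ cut condition, an $s$--$t$ walk in $H$ must enter one layer through a neighbour of one leg of a component $C$ and leave through a neighbour of the \emph{opposite} leg, hence would certify a path between $s_C$ and $t_C$ in $G - Z$, contradicting $Z$ being a multiway cut for $X$. For a bundle clause $v_s^+ v_t^+$ with $v \notin Z$, both endpoints being reachable would witness $v$ connected in $G - Z$ to both legs of its component, again contradicting the multiway-cut property. For a terminal clause $u_\alpha^- v_\beta^-$ with common witness $x = \alpha_{C_u} = \beta_{C_v} \in X$, both endpoints reachable would connect $u$ to $x$ and $x$ to $v$ in $G - Z$, contradicting that $Z$ is a \textsc{Multicut} solution for $(u,v)$. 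The main obstacle is purely bookkeeping of ``$\pm$'' copies and the arbitrary labelling of $s_C$ vs.\ $t_C$; once the reachability dictionary is written out cleanly, all three contradictions fall out uniformly from the multiway-cut and \textsc{Multicut} properties of $Z$.
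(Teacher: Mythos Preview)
Your proposal is correct and follows essentially the same approach as the paper: both arguments translate paths in $H-Z'$ back to paths in $G-Z$ and derive the same three contradictions (with the multiway-cut property for the $s$--$t$ cut and the bundle clauses, and with the \textsc{Multicut} property for the terminal clauses). The only difference is organizational: you front-load the path-translation into a single reachability ``dictionary'' and then apply it uniformly, whereas the paper performs the translation ad~hoc inside each of the three cases; the underlying reasoning is identical.
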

\begin{proof}
Assume first that $H-Z'$ contains a path $P'$ from $s$ to $t$.
Observe that there exists a component $C$ of $G-X$ and $\alpha \in \{s,t\}$
such that all internal vertices of $P'$ are of the form $v_\alpha^+$ or $v_\alpha^-$
for $v \in C$. 
Then, the path $P'$ induces a path from $s_C$ to $t_C$ via $C$ in $G-Z$, a contradiction
to the assumption that $Z$ is a multiway cut for $X$. 

Assume now that $Z'$ violates a clause $v_s^+v_t^+$ in $B_v$.
Then first observe that $v \in C$, for a component $C$ of $G-X$. 
Let $P_s'$ be a path from $s$ to $v_s^+$ in $H-Z'$
and let $P_t'$ be a path from $s$ to $v_t^+$ in $H-Z'$. 
In $G-Z$, the path $P_s'$ yields a path $P_s$ from $s_C$ to $v$ and 
the path $P_t'$ (reversed) yields a path $P_t$ from $v$ to $t_C$.
Together, $P_s$ and $P_t$ yield a path from $s_C$ to $t_C$ in $G-Z$, a contradiction
to the assumption that $Z$ is a multiway cut.

Finally, assume that $Z'$ violates a clause $u^-_\alpha v^-_\beta$ for some
$(u,v) \in \terms$, where $C_u$ and $C_v$ are the components of $G-X$ containing
$u$ and $v$, respectively, $x \in N_G(C_u) \cap N_G(C_v)$, and 
$x = \alpha_{C_u}$, $x = \beta_{C_v}$ for $\alpha,\beta \in \{s,t\}$. 
Let $P_u'$ be a path from $s$ to $u^-_\alpha$ in $H-Z'$ and 
let $P_v'$ be a path from $s$ to $v^-_\beta$ in $H-Z'$.
In $G-Z$, $P_u'$ yields a path $P_u$ from $u$ to $x = \alpha_{C_u}$
and $P_v'$ yields a path $P_v$ from $v$ to $x = \beta_{C_v}$. 
Together, $P_u$ and $P_v$ yield a path from $u$ to $v$ in $G-Z$, a contradiction
to the assumption that $Z$ is a solution.
\end{proof}

\begin{lemma}\label{lem:multicut:backward}
If $Z'$ is a cut in $\inst'$ that satisfies all clauses that are not in bundles
and $Z$ consists of those $v$ such that $Z'$
violates $B_v$, then $Z$ is a solution to $\inst$ that is also a multiway cut for $X$.
\end{lemma}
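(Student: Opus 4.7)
The plan is to contrapose: assuming $Z$ fails to be a multiway cut for $X$ or a multicut for $\terms$, I will exhibit either an $s$-$t$ path in $H-Z'$ or a clause outside bundles that is violated by $Z'$, contradicting the hypotheses on $Z'$. The budget constraints come for free: the violated bundles of $Z'$ are exactly the bundles $B_v$ with $v\in Z$, so $|Z|\leq k$ and $\weight(Z)\leq W$; since $\weight(x)=+\infty$ for $x\in X\cup T$, necessarily $Z\cap (X\cup T)=\emptyset$.

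The main technical ingredient is a reachability dictionary between $G-Z$ and $H-Z'$. For a vertex $v$ in a component $C$ of $G-X$, an inspection of the in-neighborhoods in $H$ shows that $v_s^-$ is reachable from $s$ in $H-Z'$ if and only if there is a path in $G-Z$ from $s_C$ to $v$ with all internal vertices in $C\setminus Z$, and that $v_s^+$ is reachable if and only if additionally $(v_s^-,v_s^+)\notin Z'$; analogous statements hold for $v_t^\pm$ and $t_C$. In particular, if $v\notin Z$, then both bundle arcs of $B_v$ are outside $Z'$ and the clause $v_s^+v_t^+$ is satisfied, so $v$ cannot be reached from both $s_C$ and $t_C$ through $C\setminus Z$; moreover, the dictionary lifts any path in $G-Z$ that enters a component $C$ at $s_C$ and leaves at $t_C$ using only undeleted internal vertices into an $s$-$t$ walk in $H-Z'$ confined to the $s$-layer of $C$.

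With this dictionary, the case analysis is short. For the multiway cut property, take a simple path $P$ in $G-Z$ between distinct $x_1,x_2\in X$. Since $X$ is independent, the vertex of $P$ immediately after $x_1$ lies in some component $C$ of $G-X$, and simplicity of $P$ forces $P$ to leave $C$ at an $X$-vertex distinct from $x_1$; hence $|N_G(C)|=2$ and its two $X$-neighbors $s_C,t_C$ are precisely $x_1$ and the exit vertex, so the lifting produces an $s$-$t$ path in $H-Z'$, contradicting that $Z'$ is a cut. For the multicut property, take a simple path $P$ in $G-Z$ from $u$ to $v$ with $(u,v)\in\terms$ and let $r$ be the number of $X$-vertices that $P$ visits. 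The case $r=0$ is impossible since $X$ is a multicut for $\terms$; if $r\geq 2$, then between two consecutive $X$-vertices on $P$ lies an intermediate component traversed between two distinct $X$-neighbors, and the previous lifting again produces an $s$-$t$ path in $H-Z'$; if $r=1$, the unique $X$-vertex $x$ on $P$ lies in $N_G(C_u)\cap N_G(C_v)$, so the construction introduced a clause $u_\alpha^-v_\beta^-$ with $x=\alpha_{C_u}=\beta_{C_v}$, and the two halves of $P$ certify via the dictionary that both $u_\alpha^-$ and $v_\beta^-$ are reachable from $s$ in $H-Z'$, contradicting the hypothesis that $Z'$ satisfies every clause outside bundles.

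The step that needs the most care is the layer-by-layer lifting: one must check that every bundle arc used in the constructed $H$-walks corresponds to a vertex outside $Z$ (so that the arc is not in $Z'$) and that the non-bundle transport arcs inside a component, which come from edges of $G[C]$ and the $s$-/$t$-attachments, are automatically outside $Z'$ because a cut contains only bundle edges. Once this accounting is done, all three subcases above reduce directly to the desired contradictions with the assumed properties of $Z'$.
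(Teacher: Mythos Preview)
Your proposal is correct and follows essentially the same contrapositive lifting argument as the paper's proof. The only notable variation is in how you derive the contradiction for the multiway-cut property: you lift the offending $s_C$–$t_C$ path entirely through the $s$-layer to obtain an $s$–$t$ path in $H-Z'$ (contradicting that $Z'$ is a cut), whereas the paper picks any internal vertex $v$ on that path and shows the bundle clause $v_s^+ v_t^+$ is violated (contradicting $v\notin Z$); both work, and your explicit $r\geq 2$ subcase in the multicut part simply re-derives what the paper gets by invoking the already-established multiway-cut property to force $r\leq 1$.
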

\begin{proof}
We first show that $Z$ is a multiway cut for $X$. 
By contradiction, assume that there exists a component $C$ of $G-X$ and a path $P$
from $s_C$ to $t_C$ via $C$ that avoids $Z$. Let $v$ be an arbitrary vertex
of $P$ in $C$. Then, the prefix of $P$ from $s_C$ to $v$ lifts to a path $P_s'$
in $H-Z'$ from $s$ to $v_s^+$. Similarly, the suffix of $P$ from $v$ to $t_C$, reversed,
lifts to a path $P_t'$ in $H-Z'$ from $s$ to $v_t^+$.
Hence, $Z'$ violates the clause $v_s^+v_t^+$ and hence the bundle $B_v$, which is a contradiction.

Consider now $(u,v) \in \terms$ and assume there is a path $P$ from $u$ to $v$ in $G-Z$.
Since $Z$ is a multiway cut for $X$, $P$ contains at most one vertex of $X$. 
Since $u$ and $v$ are in distinct connected components of $G-X$ (say, $C_u$ and $C_v$, respectively),
$P$ contains at least one vertex of $X$. 
That is, $P$ starts in $u$, continues via $C_u$ to a vertex $x \in N_G(C_u) \cap N_G(C_v)$,
and then continues via $C_v$ to $v$. 
The prefix of $P$ from $u$ to $x$ (reversed) lifts to a path $P_u'$ in $H-Z'$
from $s$ to $u_\alpha^-$ where $x = \alpha_{C_u}$, $\alpha \in \{s,t\}$.
The suffix of $P$ from $x$ to $v$ lifts to a path $P_v'$ in $H-Z'$
from $s$ to $v_\beta^-$ where $x = \beta_{C_v}$, $\beta \in \{s,t\}$. 
Hence, the clause $u_\alpha^- v_\beta^-$ is violated by $Z'$, a contradiction.
This finishes the proof of Lemma~\ref{lem:multicut:backward}. 
\end{proof}

With the discussion above, Lemmata~\ref{lem:multicut:forward} and~\ref{lem:multicut:backward} conclude the proof of Theorem~\ref{thm:multicut}.

\section{Group Feedback Edge/Vertex Set}\label{sec:gfvs}

This section is devoted to the proof of Theorem~\ref{thm:gfvs}. 
In fact, we just closely follow the arguments of~\cite{CyganPP16} and verify that
they work also in the weighted setting. The algorithm reduces the problem
to multiple instances of \textsc{Multiway Cut}. Here, in the presence of weights,
we apply the algorithm of Theorem~\ref{thm:multicut} to solve \textsc{Weighted Multiway Cut} (in particular, we use Corollary~\ref{cor:multiway-cut}).

As discussed in Section~\ref{sec:prelims}, we can focus on the vertex-deletion variant
\textsc{Group Feedback Vertex Set}
Using iterative compression (Section~\ref{sec:prelims}) we assume that,
apart from the input instance $(G,\psi,\weight,k,W)$, we are given
a set $X \subseteq V(G)$ of size at most $k+1$ such that $G-X$ has no non-null cycles
and the goal is to find a solution disjoint from $X$. We set $\weight(x) = +\infty$
for every $x \in X$. Recall that in this problem the input graph $G$ is equipped with a group $\Gamma$.

For a graph $H$ with group labels $\psi$, a \emph{consistent labeling} is
a function $\phi : V(H) \to \Gamma$ such that 
$\phi(v) = \phi(u) + \psi(e,u)$ for every $e = uv \in E(H)$.
It is easy to see that $(H,\psi)$ has no non-null cycle if and only
it admits a consistent labeling.

\paragraph{Untangling.}
By standard relabelling process, we can assume that $\psi(e,v) = 0$ for every
$e \in E(G-X)$ and $v \in e$; we call such an instance \emph{untangled}.
Since $G-X$ has no non-null cycles, there exists $\phi : V(G) \setminus X \to \Gamma$
such that for every $e=uv \in E(G-X)$ we have $\phi(v) = \phi(u) + \psi(e,u)$.
For every $e=uv \in E(G-X)$ we relabel $\psi(e,u) := \phi(u) + \psi(e,u) - \phi(v)$
and $\psi(e,v) := \phi(v) + \psi(e,v) - \phi(u)$.
Furthermore, for every $e=uv \in E(G)$ with $u \in X$ but $v \notin X$,
we relabel
$\psi(e,u) := \psi(e,u) - \phi(v)$ and $\psi(e,v) = \phi(v) + \psi(e,v)$. 
It is easy to check that, after the above relabeling, for every closed walk
$C$ it does not change whether $\psi(C) = 0$ or not, while $\psi(e,v) = 0$ for every $e \in E(G-X)$ and $v \in e$.

\paragraph{Extending a labeling of $X$.}
We now observe that, given a labeling $\phi_0 : X \to \Gamma$, finding a set $Z \subseteq V(G) \setminus X$ such that $\phi_0$ extends to a consistent labeling of $G-Z$ reduces to \textsc{Multiway Cut}.
\begin{lemma}\label{lem:gfvs:mwc}
There exists a randomized FPT algorithm with running time bound $2^{k^{\Oh(1)}} n^{\Oh(1)}$
that, given an untangled instance $(G,\psi,\weight,k,W,X)$ and a function $\phi_0 : X \to \Gamma$,
  checks if there is a set $Z \subseteq V(G) \setminus X$ of cardinality at most $k$ and weight
  at most $W$ such that $G-Z$ admits a consistent labeling extending $\phi_0$.
\end{lemma}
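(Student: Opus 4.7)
The plan is to reduce to \textsc{Weighted Multiway Cut} (Corollary~\ref{cor:multiway-cut}) by reading off the forced labels that $\phi_0$ imposes across the $X$--interface. Since the instance is untangled, any consistent labeling $\phi$ of $G-Z$ must be constant on every connected component of $(G - X) - Z$, and for every edge $e = uv$ with $u \in X$, $v \notin X \cup Z$ it must satisfy $\phi(v) = \phi_0(u) + \psi(e,u)$. For each $v \in V(G) \setminus X$ I would precompute the multiset $L(v)$ of all these ``forced label'' values contributed by edges from $X$.

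I would then do three cleaning steps. First, check that $\phi_0$ is internally consistent on $G[X]$: if some edge $e = uv \in E(G[X])$ fails $\phi_0(v) = \phi_0(u) + \psi(e,u)$, reject, since $Z$ is disjoint from $X$. Second, for every $v \in V(G) \setminus X$ with $|L(v)| \geq 2$ force $v$ into $Z$ (no consistent extension can include such a $v$), decrementing $k$ and $W$ and rejecting if the budget is exceeded. Third, record for each remaining $v \in V(G) \setminus X$ its unique forced label $\ell(v) \in \Gamma$ when $L(v) \neq \emptyset$; otherwise mark $v$ unconstrained.

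Finally, I would build a \textsc{Weighted Multiway Cut} instance $G^\ast$ on vertex set $(V(G) \setminus X) \cup \{t_\ell : \ell \text{ is some } \ell(v)\}$, keeping all edges of $G[V(G) \setminus X]$ with their original weights, adding a fresh edge $v t_{\ell(v)}$ for every constrained $v$, and setting $\weight(t_\ell) = +\infty$; the terminals are the $t_\ell$. The equivalence is straightforward in both directions: a cutset $Z \subseteq V(G) \setminus X$ separates all terminals in $G^\ast - Z$ iff every component of $(G-X)-Z$ carries at most one forced value, in which case extending $\phi_0$ by labelling each such component uniformly with its forced label (or arbitrarily if none) gives a consistent labeling; conversely, any consistent extension of $\phi_0$ forces the components to respect the forced labels and thus separates the terminals. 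An invocation of Corollary~\ref{cor:multiway-cut} on $G^\ast$ then yields the claimed $2^{k^{\Oh(1)}} n^{\Oh(1)}$ bound. The only point that looks delicate is the oracle-access version of the group: it causes no trouble because the reduction only ever compares, hashes, and stores $\Oh(|V(G)|)$ group elements, so only polynomially many group operations are invoked.
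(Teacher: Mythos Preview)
Your reduction is essentially the same as the paper's: both check consistency of $\phi_0$ on $G[X]$, introduce one terminal per forced label value, attach each boundary vertex to its forced-label terminal, and invoke the weighted \textsc{Multiway Cut} algorithm on the resulting graph.

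One small but genuine slip: in your step~2 you force $v$ into $Z$ whenever the \emph{multiset} $L(v)$ has size at least~$2$. This is wrong as written---two edges from $X$ to $v$ can impose the \emph{same} label (e.g., parallel edges, or edges from distinct $x,x'\in X$ with $\phi_0(x)+\psi(xv,x)=\phi_0(x')+\psi(x'v,x')$), and then $v$ need not be deleted; forcing it in could blow the budget and cause a false \no. You want ``$L(v)$ contains at least two \emph{distinct} values''. In fact the whole preprocessing step is unnecessary: if you simply add an edge $v\,t_g$ for \emph{every} edge $xv$ with $x\in X$ and $g=\phi_0(x)+\psi(xv,x)$ (as the paper does), then a vertex adjacent to two distinct terminals is automatically forced into any vertex multiway cut, so the \textsc{Multiway Cut} call handles this case for you.
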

\begin{proof}
First, we check if for every $e=uv \in E(G[X])$ we indeed have $\phi_0(v) = \phi_0(u)+\psi(e,u)$,
as otherwise the answer is \no.
We construct a \textsc{Multiway Cut} instance as follows. 
Let $T$ be the set of those elements $g \in \Gamma$ such that there exists $uv \in E(G)$,
$u \in X$, $v \notin X$, and $g = \phi_0(u) + \psi(uv, u)$
(i.e., in a consistent labeling extending $\phi_0$, we would need to assign $g$ to $v$). 
Note that $|T| \leq |E(G)|$. 
Let $H$ be the graph consisting of a copy of $G-X$ (with weights inherited),
the set $T$ as additional vertices,
and for every $uv \in E(G)$, $u \in X$, $v \notin X$, an edge from $\phi_0(u)+\psi(uv,u) \in T$
to $v$. 
A direct check shows that it suffices to solve the obtained \textsc{Multiway Cut} instance
$(G,T,\weight,k,W)$ and return the answer (the proof of the equivalence is spelled out
    in the proof of Lemma~7 in~\cite{CyganPP16}). 
\end{proof}

\paragraph{Enumerating reasonable labelings of $X$.}
Since $\Gamma$ can be large, we cannot enumerate all labelings $\phi_0 : X \to \Gamma$.
In~\cite{CyganPP16}, a procedure is presented that enumerates
a family of $2^{\Oh(k \log k)}$ labelings such that, for every solution $Z$, there is a consistent
labeling of $G-Z$ that extends one of the enumerated labelings.

The main trick lies in the following reduction step. 
For $v \in V(G) \setminus X$ and $x \in X$, we define a flow graph $F(v,x)$ as follows.
Let $\Gamma_x$ be the set of those $g \in \Gamma$ such that there exists $xu \in E(G)$,
$u \notin X$ and $\psi(xu,u) = g$. Note that $|\Gamma_x| \leq |E(G)|$.
The graph $F(v,x)$ consists of a copy of $G-X$, the set $\Gamma_x$ as additional vertices
and, for every $xu \in E(G)$ with $u \notin X$, an edge $u \psi(xu,u)$. 

We have the following statement.
\begin{lemma}[Lemma~8 of~\cite{CyganPP16}]\label{lem:gfvs:red}
If there are $k+2$ paths in $F(v,x)$ from $v$ to distinct elements of $\Gamma_x$
that are vertex-disjoint except for $v$, then $v$ is contained in every solution
of cardinality at most $k$.
\end{lemma}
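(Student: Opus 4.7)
The plan is a standard Menger-style contradiction that crucially leverages the untangled form of the instance. I would assume for contradiction that some solution $Z$ of cardinality at most $k$ avoids $v$, fix a consistent labeling $\phi$ of $G-Z$, and derive that two of the promised $k+2$ paths are forced to end at the same element of $\Gamma_x$.

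The first step is to observe that since the $k+2$ paths in $F(v,x)$ are internally vertex-disjoint (sharing only $v$) and $v \notin Z$, each vertex of $Z$ can lie on at most one of them; with $|Z|\leq k$, at least two paths $P_1, P_2$ have all internal vertices outside $Z$. Next I would translate back to $G$: each $P_i$ consists of a path inside $G-X$ from $v$ to some neighbor $u_i$ of $x$ in $G$, followed by the bookkeeping edge to $\psi(x u_i, u_i) \in \Gamma_x$. Writing $g_i := \psi(x u_i, u_i)$, the assumption that the paths end in distinct elements of $\Gamma_x$ gives $g_1 \neq g_2$.

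The conclusion would combine two consistency constraints. By untangledness, every edge of $G-X$ carries group label $0$, so any consistent labeling is constant along each connected component of $G-X-Z$; in particular, since the $G-X$-prefix of $P_i$ avoids $Z$, we obtain $\phi(v) = \phi(u_1) = \phi(u_2)$. On the other hand, $x \in X$ forces $x \notin Z$, so both edges $x u_i$ are present in $G-Z$ and consistency gives $\phi(u_i) = \phi(x) + \psi(x u_i, x) = \phi(x) - g_i$ for $i=1,2$. Together these force $g_1 = g_2$, contradicting $g_1 \neq g_2$, so every solution of cardinality at most $k$ must contain $v$.

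The one delicate point is the bookkeeping of $F(v,x)$: its $\Gamma_x$-endpoints are abstract ``label'' vertices rather than vertices of $G$, so distinct endpoints only guarantee $g_1 \neq g_2$ and not necessarily $u_1 \neq u_2$. This is harmless, since the contradiction relies only on $g_1 \neq g_2$; one just has to make sure the internal-disjointness in $F(v,x)$ is translated into the correct Menger-style bound relating $|Z|$ to the genuine $G$-vertices on each path.
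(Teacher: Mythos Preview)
The paper does not supply its own proof of this lemma; it is quoted as Lemma~8 of~\cite{CyganPP16} and used as a black box. Your argument is correct and is precisely the intended one: pigeonhole two of the $k+2$ internally disjoint paths past the at most $k$ vertices of $Z$, then use the consistent labeling $\phi$ of $G-Z$ together with untangledness (all labels in $G-X$ are $0$) to get $\phi(u_1)=\phi(v)=\phi(u_2)$, while the surviving edges $xu_i$ force $\phi(u_i)=\phi(x)-g_i$, yielding $g_1=g_2$.

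One small point worth tightening, different from the one you flagged: a path in $F(v,x)$ may in principle visit a $\Gamma_x$-vertex internally, so the assertion ``each $P_i$ consists of a path inside $G-X$ from $v$ to $u_i$ followed by the bookkeeping edge'' is not automatic. The fix is immediate: truncate each $P_i$ at the first $\Gamma_x$-vertex it meets. The truncated paths are still vertex-disjoint except at $v$ (being subpaths), and their $\Gamma_x$-endpoints are still pairwise distinct by that same disjointness, so the rest of your argument goes through unchanged.
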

The condition of Lemma~\ref{lem:gfvs:red} can be checked in polynomial time.
If such a vertex $v$ is discovered, we can delete it, decrease $k$ by one,
decrease $W$ by $\weight(v)$, and repeat the analysis. 

Fix $x,y \in X$, $x \neq y$.
An \emph{external path} from $x$ to $y$ is a path with endpoints $x$ and $y$
and all internal vertices in $G-X$; note that an edge $xy$ is also an external path.
Let $\Gamma(x,y)$ be the set of all elements $g \in \Gamma$ such that there exists an external
path $P$ from $x$ to $y$ with $\psi(P) = g$. 
We have also the following statement.
\begin{lemma}[Lemma~9 of~\cite{CyganPP16}]\label{lem:gfvs:prop}
If there is no vertex $v$ as in Lemma~\ref{lem:gfvs:red},
   but for some $x,y \in X$, $x \neq y$ we have $|\Gamma(x,y)| \geq k^3(k+1)^2 + 2$,
   then there is no solution of cardinality at most $k$.
\end{lemma}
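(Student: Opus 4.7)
The plan is to prove the contrapositive: assuming a solution $Z$ with $|Z| \le k$ exists, I will bound $|\Gamma(x,y)|$ from above. The starting observation is that, since $Z$ destroys all non-null cycles, the graph $G-Z$ admits a consistent labeling $\phi$. Setting $g^{\ast} := \phi(y) - \phi(x)$, every external $x$-$y$ path surviving in $G-Z$ has $\psi$-value exactly $g^{\ast}$. Therefore, for each $g \in \Gamma(x,y) \setminus \{g^{\ast}\}$ and every external path $P$ realizing the value $g$, some internal vertex of $P$ lies in $Z$. Associating to each such $g$ some $z_g \in Z$ on a representative path $P_g$ reduces the problem to bounding, for each $z \in Z$, the number of distinct $\psi$-values that can be charged to $z$.

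Because the instance is untangled, an external path $P = x u_1 u_2 \cdots u_m y$ satisfies $\psi(P) = \psi(xu_1, x) + \psi(u_m y, u_m)$; in particular, $\psi(P)$ depends only on the first and last internal vertices $u_1, u_m$, which share with $z$ the same component $C_z$ of $G-X$. For a fixed $z \in Z$ I apply the negation of Lemma~\ref{lem:gfvs:red}: by Menger's theorem there exist vertex cuts in $F(z,x)$ and $F(z,y)$, each of size at most $k+1$, separating $z$ from $\Gamma_x$ and $\Gamma_y$ respectively. Splitting each cut into its $V(G-X)$-part and its $\Gamma$-part yields a \emph{local piece} $R_z$ of $z$ in $G-X$ inside which at most $k+1$ distinct $x$-labels $\psi(xu,x)$ and at most $k+1$ distinct $y$-labels $\psi(u'y,u')$ occur. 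Hence external paths whose first and last internal vertices both lie in $R_z$ contribute at most $(k+1)^2$ distinct $\psi$-values chargeable to $z$.

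Paths through $z$ whose endpoint $u_1$ or $u_m$ exits $R_z$ must cross one of the at most $2(k+1)$ cut vertices in the $V(G-X)$-parts of the two cuts. Iterating the same Menger analysis at each of these cut vertices produces an outer envelope of additional values attributable to $z$, and summing over $z \in Z$ gives a polynomial bound in $k$; a careful accounting yields exactly the claimed bound $|\Gamma(x,y)| \le k^3(k+1)^2 + 1$, contradicting the hypothesis and completing the proof. The principal obstacle will be carrying out this bookkeeping rigorously: one must ensure that every external path of value different from $g^{\ast}$ is indeed captured by the local analysis around some $z \in Z$ together with a bounded constellation of its cut vertices, and that values contributed by paths passing through regions shared among several $z$'s are counted without gaps or double-counting.
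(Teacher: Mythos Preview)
The paper does not prove this lemma at all: it is imported verbatim as ``Lemma~9 of~\cite{CyganPP16}'' and used as a black box, so there is no in-paper argument to compare against. Your task, if you want an actual proof here, is to reproduce (or replace) the argument of Cygan, Pilipczuk, and Pilipczuk.

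On its own merits, your proposal has the right skeleton---fix a solution $Z$, use a consistent labeling of $G-Z$ to isolate a single ``surviving'' value $g^\ast$, exploit untangledness so that $\psi(P)$ depends only on the first and last internal vertices, and then charge every other value of $\Gamma(x,y)$ to some $z\in Z$---but the core counting step is not carried out. Two concrete gaps stand out. First, the negation of Lemma~\ref{lem:gfvs:red} says there are no $k{+}2$ paths from $v$ to \emph{pairwise distinct} elements of $\Gamma_x$ that are vertex-disjoint except at $v$; this is not the same as a bound on ordinary $v$--$\Gamma_x$ connectivity, so you cannot immediately invoke Menger to get a size-$(k{+}1)$ vertex cut in $F(v,x)$. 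You need an intermediate step (or a different formulation) that actually extracts a small separating object from this weaker hypothesis. Second, the phrases ``iterating the same Menger analysis at each of these cut vertices'' and ``a careful accounting yields exactly the claimed bound'' are placeholders, not arguments: you have not specified what is being iterated, why the iteration terminates, or how the factors $k^3$ and $(k{+}1)^2$ arise. As written, the proposal is a plan rather than a proof; to complete it you must either reconstruct the bookkeeping of~\cite{CyganPP16} in full or supply an alternative argument that genuinely reaches the bound.
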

The condition of Lemma~\ref{lem:gfvs:prop} can be again checked in polynomial time and,
if we find that $\Gamma(x,y)$ is too large for some $x,y \in X$, $x \neq y$, we return \no.

Otherwise, we enumerate resonable labelings $\phi_0 : X \to \Gamma$ as follows.
First, we guess how $X$ is partitioned into connected components of $G-Z$ for a hypothetical
solution $Z$; in every connected component, we can set $\phi_0$ independently.
Let $Y \subseteq X$ be a set of vertices guessed to be in the same connected component of $G-Z$;
note that necessarily $Y$ needs to live in the same connected component of $G$,
so $\Gamma(x,y) \neq \emptyset$ for every distinct $x,y \in Y$.
Fix $y \in Y$ and set $\phi_0(y) = 0$. 
Note that in a consistent labeling of $G-Z$ that assigns the value of $y$ to $0$, 
for $x \in Y \setminus \{y\}$ the value assigned to $x$ needs to be in $\Gamma(y,x)$
as a path $P$ from $y$ to $x$ in $G-Z$ has $\psi(P) \in \Gamma(y,x)$. 
By Lemma~\ref{lem:gfvs:prop}, there are only $\Oh(k^5)$ options for $\phi_0(x)$.
Overall, this gives $2^{\Oh(k \log k)}$ options for $\phi_0$, as desired.

This finishes the proof of Theorem~\ref{thm:gfvs}.

\section{Directed Subset Feedback Edge/Vertex Set}\label{sec:dsfvs}

This section is devoted to the proof of Theorem~\ref{thm:dsfvs}. 
As discussed in Section~\ref{sec:prelims}, we can restrict ourselves
to the edge-deletion version, that is, to the \textsc{Directed Subset Feedback Edge Set} problem.
Furthermore, we can assume that red edges are undeletable (of weight $+\infty$):
for every $e = (u,v) \in R$, we subdivide $e$, replacing it with a path $u \to x_e \to v$;
the edge $(u,x_e)$ becomes red and of weight $+\infty$, and $(x_e,v)$ is not red and inherits
the weight of $e$.

Let $\inst = (G,R,\weight,k,W)$ be the input instance.
Using iterative compression, we can assume we are given access to a set $X \subseteq V(G)$ of size
at most $k+1$ such that $G-X$ has no cycle involving a red edge.

Let $Z \subseteq E(G) \setminus R$. Observe that $G-Z$ has no cycle containing a red edge
if and only if for every $(u,v) \in R$, there is no path from $v$ to $u$ in $G-Z$. 
The latter condition is equivalent to $u$ and $v$ being in different strong connected components
of $G-Z$. We will use the above reformulations of the desired separation property 
interchangably.

Let $Z$ be a sought solution. 
We start with some branching steps.
First, we guess how the vertices of $X$ are partitioned between strong
connected components of $G-Z$.
We identify vertices of $X$ that are guessed to be in the same connected components of $G-Z$;
note that in the branch where the guess is correct, this does not change whether two vertices
of $G-Z$ are in the same strong connected component or not.
Henceforth, by somewhat abusing the notation, we can assume that the vertices of $X$ lie 
in distinct strong connected components of $G-Z$. 
We guess the order of $X$ in a topological ordering of the strong connected components of $G-Z$;
that is, we guess an enumeration of $X$ as $x_1,x_2,\ldots,x_{|X|}$ such that in $G-Z$
there is no path from $x_j$ to $x_i$ for $1 \leq i < j \leq |X|$.
Since initially $|X| \leq k+1$, there are $2^{\Oh(k \log k)}$ branches up to this point
and we retain the property $|X| \leq k+1$.

We now construct a \gdpc{} instance $\inst'$. 
We first construct a graph $H$ as follows.
We start from $2|X|+1$ copies of the graph $G$, denoted $G^a$ for $1 \leq a \leq 2|X|+1$. 
For $u \in V(G)$, let $u^a$ be the copy of $u$ in the graph $G^a$.
For every $1 \leq a < b \leq 2|X|+1$ and every $u \in V(G)$ we add an arc $(u^b,u^a)$. 
For every red arc $(u,v) \in R$ and every $1 \leq a \leq |X|$, we add an arc $(u^{2a},v^{2a+1})$ 
Finally, we introduce two new vertices $s$ and $t$ and, for every $1 \leq a \leq X$ and
$1 \leq b \leq 2|X|+1$ an arc $(s,x_a^b)$ if $2a \geq b$ and an arc $(x_a^b,t)$ if $2a < b$.

For every $e = (u,v) \in E(G) \setminus R$, we make a bundle $B_e$ consisting of all $2|X|+1$
copies of the arc $e$. We set $\weight(B_e) = \weight(e)$. 
This finishes the description of a \gdpc{} instance $\inst' = (H,s,t,\emptyset,\bundles,\weight,k,W)$ with no clauses. See Figure~\ref{fig:dsfvs}.

\begin{figure}[tb]
\begin{center}
\includegraphics{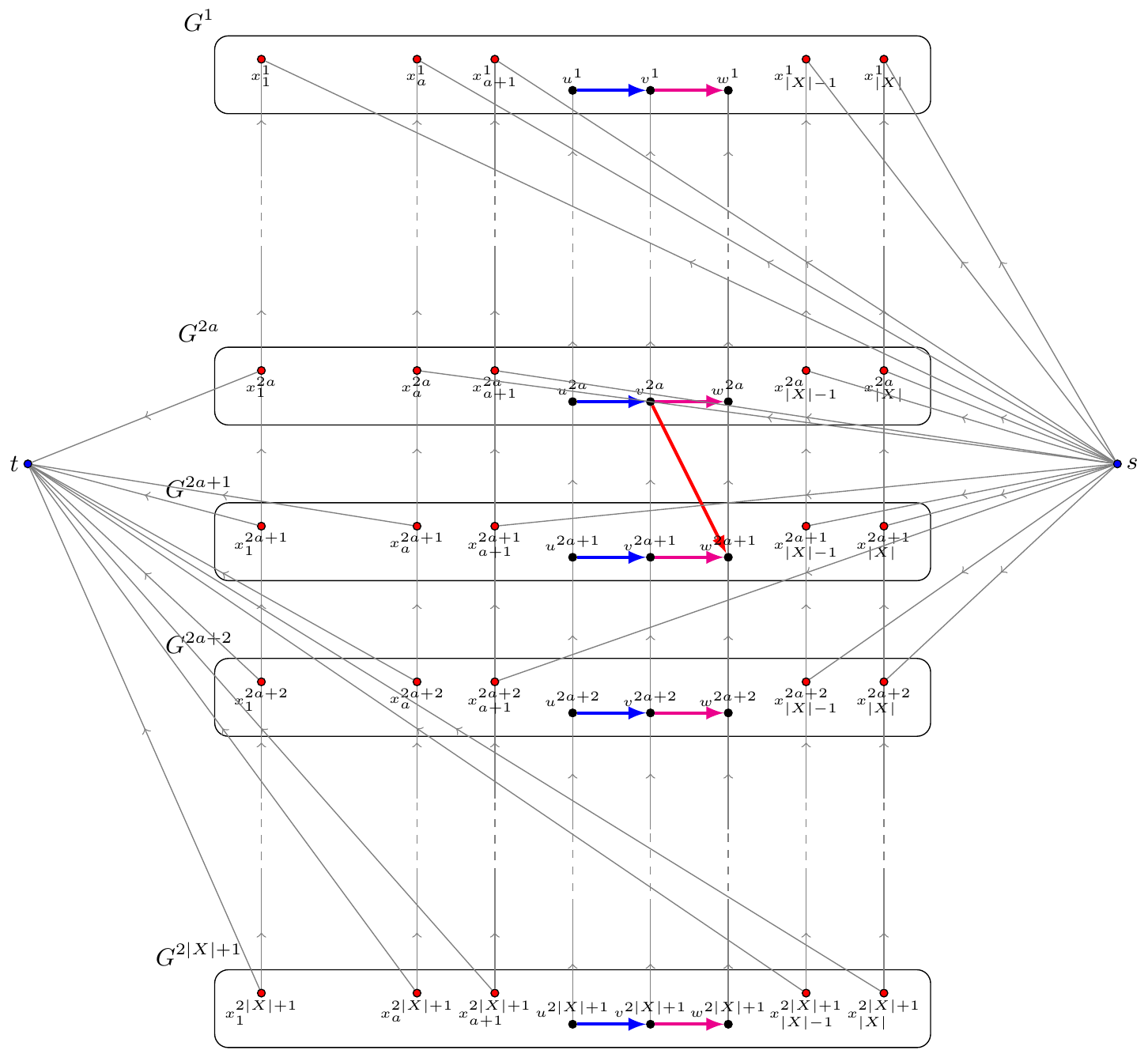}
\caption{Illustration of the reduction of Section~\ref{sec:dsfvs}.
All copies of an edge $e$ of $G$ form a bundle $B_e$: here blue edges
form one bundle for an edge $(u,v)$ and magenta edges
form another bundle for an edge $(v,w)$. 
Furthermore, if $(v,w)$ is red, then there is an extra
arc $(v^{2a},w^{2a+1})$ for every $1 \leq a \leq |X|$, depicted
in red. Intutively, this arc, together with
arcs $(s,x_a^{2a})$ and $(x_a^{2a+1},t)$ asks to destroy all closed walks that
pass both through $(v,w)$ and $x_a$.}\label{fig:dsfvs}
\end{center}
\end{figure}

We observe that the obtained instance has pairwise linked deletable edges, due to the existence
of crisp arcs $(u^b,u^a)$ for every $u \in V(G)$ and $1 \leq a < b \leq 2|X|+1$. 
Furthermore, every bundle contains at most $2|X|+1 \leq 2k+3$ deletable edges.
Thus, by Theorem~\ref{thm:gdpc:linked}, we can resolve it in randomized FPT time
$2^{\Oh(k^8 \log k)} n^{\Oh(1)}$. 

It remains to show that the answer to $\inst'$ is actually meaningful.
This is done in the next two lemmata that complete the proof of Theorem~\ref{thm:dsfvs}.

\begin{lemma}\label{lem:dsfvs:forward}
Let $Z \subseteq E(G) \setminus R$ be such that
$G-Z$ has no cycle containing a red edge, and, additionally,
all vertices of $X$ lie in distinct strong connected components of $G-Z$ 
and there is no path from $x_j$ to $x_i$ in $G-Z$ for every $1 \leq i < j \leq |X|$.
Then $Z' = \bigcup_{e \in Z} B_e$ is a solution to $\inst'$.
\end{lemma}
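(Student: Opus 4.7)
The plan is to verify that $Z' = \bigcup_{e \in Z} B_e$ meets all requirements of a solution to $\inst'$: it is a subset of bundle edges, it is an $s$\nobreakdash-$t$ cut in $H$, and it violates at most $k$ bundles of total weight at most $W$. Membership in bundle edges is immediate from the definition of $Z'$. For the budget conditions, note that each bundle $B_e$ consists solely of copies of the non-red arc $e$ and contains no clauses, so $B_e$ is violated iff $e \in Z$; this yields exactly $|Z| \le k$ violated bundles with total weight $\sum_{e \in Z}\weight(e) = \weight(Z) \le W$. All substantive work therefore lies in showing that $H-Z'$ admits no $s$-to-$t$ path.

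To prepare for this, I would record how the \emph{copy level} $b$ (the superscript in $u^b$) can change along a path in $H$: same-copy arcs preserve it, crisp arcs $(u^b,u^a)$ with $a<b$ decrease it, and red-transition arcs $(u^{2f},v^{2f+1})$ for $(u,v)\in R$ increase it by exactly one (from an even to an odd level). Crucially, since a non-red arc $e \in E(G) \setminus R$ either belongs to $Z$ (in which case all $2|X|+1$ copies are in $Z'$) or belongs to no copy of $Z'$, the projection $u^b \mapsto u$ maps any walk in $H-Z'$ to a walk in $G-Z$, using the red edges exactly when red-transition arcs are traversed.

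Now suppose for contradiction that an $s$-$t$ path $P'$ exists in $H-Z'$, beginning with $(s, x_a^b)$ (so $b \le 2a$) and ending with $(x_c^d, t)$ (so $d \ge 2c+1$). I would analyze the sequence of copy levels along $P'$. If $P'$ uses no red transition, the level is non-increasing, forcing $d \le b \le 2a$ alongside $d \ge 2c+1$, hence $c < a$. If $P'$ uses red transitions at levels $f_1, f_2, \ldots, f_t$ in the order encountered, then between consecutive transitions the level only stays constant or decreases, giving $f_1 \ge f_2 \ge \cdots \ge f_t$; the descent from $b$ to $2f_1$ before the first transition forces $f_1 \le a$, and the descent from $2f_t+1$ to $d$ after the last forces $f_t \ge c$. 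In either case, $c \le a$.

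The projection $W$ of $P'$ is a walk from $x_a$ to $x_c$ in $G-Z$, containing at least one red edge whenever a red transition is used. If $c < a$, then $W$ contains a path from $x_a$ to $x_c$ in $G-Z$, contradicting the assumed topological ordering $x_1,\ldots,x_{|X|}$ of the strong components. If $c = a$, the inequality $d > b$ forces at least one red transition, so $W$ is a closed walk at $x_a$ in $G-Z$ using at least one red edge, contradicting the assumption that $G-Z$ has no cycle with a red edge. The main obstacle is setting up this level analysis cleanly so that the two source/sink connection conditions ($b \le 2a$ and $d \ge 2c+1$) translate precisely into the bounds $f_1 \le a$ and $f_t \ge c$, which is exactly what makes the copy construction of $\inst'$ force the desired pair of contradictions.
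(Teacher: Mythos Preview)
Your proof is correct and reaches the same conclusion as the paper, but via a genuinely different path-analysis.

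The paper first normalizes the offending $s$--$t$ path: it chooses $P'$ to minimize the number of copies $G^a$ it visits, then reroutes the prefix of $P'$ entirely into the minimum-level copy and the suffix entirely into the maximum-level copy. This forces $P'$ into one of two canonical shapes---either all internal vertices lie in a single $G^a$, or $P'$ uses exactly one red-transition arc $(u^{2a},v^{2a+1})$---and each shape is dispatched by a short case check.

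You instead keep $P'$ arbitrary and track the copy-level sequence directly, allowing any number of red transitions $f_1\ge f_2\ge\cdots\ge f_t$ and squeezing out $c\le f_t\le\cdots\le f_1\le a$ from the monotonicity between transitions. Projecting the whole path to $G-Z$ then yields either a walk from $x_a$ to $x_c$ with $c<a$ (violating the guessed topological order) or, when $c=a$, a closed walk through a red edge. Your route avoids the normalization step entirely and is arguably more elementary; the paper's route buys a cleaner endgame with only two cases and at most one red transition. One small imprecision: your phrase ``using the red edges exactly when red-transition arcs are traversed'' is too strong, since same-copy red arcs $(u^a,v^a)$ also project to red edges; but only the forward implication (a red transition forces a red edge in the projection) is needed, and that you have.
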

\begin{proof}
By contradiction, assume that $G-Z'$ contains a path $P'$ from $s$ to $t$.
Pick such a path $P'$ that minimizes the number of indices $a$ such that $P'$ contains a vertex
of $G^a$.
Let $a \in \{1, \ldots,2|X|+1\}$ be the minimum index such that $P'$ contains a vertex of $G^a$ and let $u^a \in V(P')$
be the last vertex of $P'$ in $G^a$. 
Observe that if $(s,x_i^b)$ is the first edge of $P'$, then $H$ also contains 
crisp edges $(s,x_i^{b'})$ for every $1 \leq b' \leq b$. Hence, by the minimality of $a$,
we can modify $P'$ so that the entire prefix from $s$ to $u^a$ is contained in $G^a$: 
whenever $P'$ traverses a vertex $v^{a'}$, we instead traverse the vertex $v^a$. 

Symmetrically, by choosing $a$ to be maximum such that $P'$ contains a vertex of $G^a$
and $u^a \in V(P')$ to be the first vertex of $P'$ in $G^a$, we observe that we can replace
the suffix of $P'$ from $u^a$ to $t$ so that it is completely contained in $G^a$. 

Observe that the only edges of $H$ that lead from $G^a$ to $G^b$ for $a < b$
are edges of the form $(u^{2a},v^{2a+1})$ for $1 \leq a \leq |X|$ and red arcs $(u,v)$.
Hence, we can assume that the path $P'$ is of one of the following two types:
\begin{enumerate}
\item All internal vertices of $P'$ lie in the same graph $G^a$.
\item For some $1 \leq a \leq |X|$, the path $P'$ first goes from $s$ via $G^{2a}$, then
uses one edge $(u^{2a},v^{2a+1})$ for some $(u,v) \in R$, and then continues via $G^{2a+1}$
to $t$.
\end{enumerate}
In the first case, let $(s,x_j^a)$ be the first edge of $P'$ and let $(x_i^a,t)$ be the
last edge of $P'$. By construction of $H$, we have $2j \geq a > 2i$, so $j > i$. 
Thus, $P'$ without the first and the last edge
gives a path in $G-Z$ from $x_j$ to $x_i$ for some $j > i$, a contradiction.

In the second case, let $(s,x_j^{2a})$ be the first edge of $P'$ and let $(x_i^{2a+1},t)$ be the
last edge of $P'$. By construction of $H$, we have $2j \geq 2a$ and $2a+1 > 2i$, so $j \geq i$. 
If $j > i$, $P'$ without the first and the last edge gives a path from $x_j$ to $x_i$, again a contradiction as in the first case. 
If $j = i$, then the subpath of $P'$ from $v^{2a+1}$ to $x_i^{2a+1}$ gives a path from
$v$ to $x_i$ in $G-Z$ and the subpath of $P'$ from $x_j^{2a}$ to $u^{2a}$ gives a path
from $x_j$ to $u$ in $G-Z$. As $j=i$, this gives a path from $v$ to $u$ in $G-Z$, a contradiction
as $(u,v)$ is a red edge.
\end{proof}

\begin{lemma}\label{lem:dsfvs:backward}
Let $Z'$ be a cut in $\inst'$ and let $Z = \{e \in E(G) \setminus R~|~B_e \cap Z' \neq \emptyset\}$.
Then $G-Z$ contains no cycle containing a red edge.
\end{lemma}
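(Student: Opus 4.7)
The plan is to argue by contradiction: assume $G-Z$ contains a cycle through a red arc, and I will lift it to an $s$-to-$t$ walk in $H-Z'$, contradicting that $Z'$ is a cut. So suppose there is a simple directed cycle $C$ in $G-Z$ using some red arc $(u,v) \in R$. Since $Z \cap R = \emptyset$, all red arcs of $G$ remain in $G-Z$; combined with the iterative-compression hypothesis that $G-X$ contains no cycle through a red arc (so neither does $G-X-Z \subseteq G-X$), this forces $C$ to meet $X$ at some vertex. Fix any $x_a \in V(C) \cap X$ and decompose the cycle around $x_a$ and the red arc as $x_a \to Q_1 \to u \to v \to Q_2 \to x_a$, allowing $Q_1$ or $Q_2$ to be trivial in the degenerate cases $u = x_a$ or $v = x_a$.

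Next, I assemble a walk in $H$ by placing the two halves of $C$ into the two consecutive copies $G^{2a}$ and $G^{2a+1}$, bridged by the red-lift arc for $(u,v)$:
\[
s \;\to\; x_a^{2a} \;\to\; (\text{lift of } Q_1 \text{ in } G^{2a}) \;\to\; u^{2a} \;\to\; v^{2a+1} \;\to\; (\text{lift of } Q_2 \text{ in } G^{2a+1}) \;\to\; x_a^{2a+1} \;\to\; t.
\]
Each arc exists by construction: the source arc $(s,x_a^{2a})$ since $2a \geq 2a$; the sink arc $(x_a^{2a+1},t)$ since $2a+1 > 2a$; and the crucial red-lift arc $(u^{2a}, v^{2a+1})$ since $(u,v) \in R$ and $1 \leq a \leq |X|$. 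Intermediate vertices of $Q_1$ or $Q_2$ that happen to lie in $X$ cause no trouble, since every vertex of $G$ has a copy in every $G^b$ and these copies carry no bundle labels.

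Finally, I verify the walk avoids $Z'$, which is essentially bookkeeping. The source arcs, sink arcs, downward connectors, the red-lift arc, and all in-copy copies of red arcs of $G$ are crisp (they belong to no bundle), so they lie in $H-Z'$ automatically. The only non-crisp arcs used are the copies in $G^{2a}$ or $G^{2a+1}$ of non-red arcs $e$ lying on $C$; since $C \subseteq G-Z$ we have $e \notin Z$, and by the definition of $Z$ this forces $B_e \cap Z' = \emptyset$, so neither copy of $e$ lies in $Z'$. The resulting $s$-to-$t$ walk in $H-Z'$ contradicts that $Z'$ is a cut. The only point that needs attention is the index arithmetic placing $Q_1$ in $G^{2a}$ and $Q_2$ in $G^{2a+1}$; this matches exactly the construction's conditions determining which red-lift, source, and sink arcs are present at index $a$, so no genuine obstacle arises.
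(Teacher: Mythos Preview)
Your proof is correct and essentially identical to the paper's own argument: both pick a vertex $x_a \in X$ on the offending cycle, split the cycle into the $x_a \to u$ and $v \to x_a$ halves, lift them to copies $G^{2a}$ and $G^{2a+1}$ respectively, and stitch them together via the crisp arcs $(s,x_a^{2a})$, $(u^{2a},v^{2a+1})$, $(x_a^{2a+1},t)$ to obtain an $s$--$t$ path in $H-Z'$. The only cosmetic difference is that the paper phrases the contradiction in terms of a path from $v$ to $u$ rather than the full cycle, and it does not explicitly comment (as you do) on in-copy red arcs being crisp; your extra remark there is correct and harmless.
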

\begin{proof}
By contradiction, assume $G-Z$ contains a path $P$ from $v$ to $u$ for some $(u,v) \in R$.
Since $G-X$ contains no such path, $P$ contains a vertex of $X$. Let $x_i \in V(P)$.
Let $P_v$ be the prefix of $P$ from $v$ to $x_i$ and
let $P_u$ be the suffix of $P$ from $x_i$ to $u$.
Consider the copy $P_v^{2i+1}$ of $P_v$ in $G^{2i+1}$ and the copy $P_u^{2i}$ of $P_u$
in $G^{2i}$. Then, since $Z'$ contains no edge of $B_e$ for any $e \in E(P)$, 
$P_v^{2i+1}$ and $P_u^{2i}$ are disjoint with $Z'$. 
This is a contradiction, as a concatenation of $(s,x_i^{2i})$, $P_u^{2i}$, $(u^{2i}, v^{2i+1})$, 
$P_v^{2i+1}$, and $(x_i^{2i+1},t)$ is a path from $s$ to $t$ in $H-Z'$.
\end{proof}

\section{Conclusions}\label{sec:conc}

We showed fixed-parameter tractability of a number of weighted graph separation problems. Our first result extends a recent result of Galby et al.~\cite{GalbyMSST22WG}, 
who considered the special case of weighted \textsc{Multicut} in trees. 
For all our algorithms, we revisited an old combinatorial approach to the problem, adjusted it to weights,
and provided a reduction to \gdpc{} in one of its tractable variants. 
The application of the technique of flow-augmentation is hidden in the algorithms
for \gdpc{} (Theorems~\ref{thm:gdpc:2k2} and~\ref{thm:gdpc:linked}). 

We would like to highlight here one graph separation problem that resisted our attempts:
\textsc{Directed Symmetric Multicut}. Here, the input consists of a directed graph $G$,
weights $\weight : E(G) \to \mathbb{Z}_+$
(that is, we consider an edge-deletion variant, but, as we are working with directed graphs,
 it is straightforward to reduce between edge- and vertex-deletion variants),
integers $k$ and $W$, and a set $\terms \subseteq \binom{V(G)}{2}$
of unordered pairs of vertices of $G$.
The problem asks for an existence of a set $Z \subseteq E(G)$ of size at most $k$
and total weight at most $W$ such that for every $uv \in \terms$, the vertices $u$
and $v$ are not in the same strong connected component of $G-Z$
(i.e., $Z$ cuts all paths from $u$ to $v$ or cuts all paths from $v$ to $u$). 
Eiben et al.~\cite{EibenRW22IPEC} considered the parameterized complexity
of \textsc{Directed Symmetric Multicut} and gave partial results, 
but the main problem of the parameterized complexity of
\textsc{Directed Symmetric Multicut} parameterized by $k$ remains open,
even in the unweighted setting.

To motivate the \textsc{Directed Symmetric Multicut} problem further,
we point out that it has a very natural reformulation in the context of \emph{temporal CSPs},
that is, constraint satisfiaction problems with domain $\mathbb{Q}$ and 
access to the order on $\mathbb{Q}$.
More formally, a \emph{temporal CSP relation} is an FO formula with a number of free variables
that can be accessed via comparison predicates $x = y$, $x \neq y$, $x < y$, and $x \leq y$. 
A \emph{temporal CSP language} is a set of temporal CSP predicates.
For a temporal CSP language $\Lambda$, an instance of \textsc{CSP($\Lambda$)}
consists of a set of variables $X$ and a set $\mathcal{C}$ of constraints; each
constraint is an application of a formula from $\Lambda$ to a tuple of variables from $X$.
The goal is to find an assignment $\alpha : X \to \mathbb{Q}$ that satisfies 
all constraints.
In the \textsc{Max SAT($\Lambda$)} problem, we are additionally given an integer $k$
and the goal is satisfy all but $k$ constraints (i.e., delete at most $k$ constraints
    to get a satisfiable instance). 

In various CSP contexts, the \textsc{Max SAT($\Lambda$)} problem is usually hard,
yet the parameterized complexity landscape with $k$ as a parameter is often rich;
see e.g. the recent dichotomy for the Boolean domain~\cite{dfl-csp} and references therein.
The P vs NP dichotomy for temporal \textsc{CSP($\Lambda$)} is known
since over a decade~\cite{BodirskyK10}. Can we establish parameterized complexity
dichotomy for temporal \textsc{Max SAT($\Lambda$)} parameterized by $k$?

One of the most prominent examples of a temporal CSP languages is
$\Lambda = \{x=y, x \neq y, x < y, x \leq y\}$, called a \emph{point algebra}. 
Here, \textsc{CSP($\Lambda$)} is known to be polynomial-time solvable. 
We observe that \textsc{Max SAT($\Lambda$)} is equivalent to (unweighted)
\textsc{Directed Symmetric Multicut}.

In one direction, given an unweighted \textsc{Directed Symmetric Multicut} instance
$(G,\terms,k)$, we set $X = V(G)$, model every arc $(u,v) \in E(G)$ as 
a constraint $u \leq v$ and each pair $uv \in \terms$ as $k+1$ copies
of a constraint $u \neq v$. Intuitively, a desired
assignment $\alpha : V(G) \to \mathbb{Q}$ maps all vertices of the same strong
connected component to the same number, and otherwise sorts the strong connected components
according to a topological ordering. 

The other direction is slightly more involved due to some technicalities.
First, we replace each constraint $x=y$ with a pair of constraints $x \leq y$
and $y \leq x$; note that we will never want to delete both such constraints.
Similarly, we replace each constraint $x < y$ with $x \neq y$ and $x \leq y$;
again we will never want to delete both resulting constraints. Thus, we can assume that
the instance uses only $x \neq y$ and $x \leq y$ constraints. 
Then, for every constraint $x \neq y$, we introduce fresh copies $x'$ and $y'$ of $x$ and $y$,
introduce constraints $x \leq x'$, $x' \leq x$, $y \leq y'$, $y' \leq y$,
and $k+1$ copies of $x' \neq y'$, and delete $x \neq y$. Now deleting $x \neq y$
is equivalent to deleting one of the inequalities, say $x \leq x'$, and setting $x'$
to some very small number different than $y$ and $y'$. 
Thus, we end up in an instance where only $x \leq y$ and $x \neq y$ constraints are present,
and the latter constraints are always undeletable (appear in batches of $k+1$ copies).
Now, we can directly model it as \textsc{Directed Symmetric Multicut}: 
we set $V(G) = X$, for every constraint $x \leq y$ we add an arc $(x,y)$
and for every batch of $k+1$ constraints $x \neq y$ we add a pair $xy$ to $\terms$.

With a very similar reduction we observe that for $\Lambda' = \{x < y, x \leq y\}$
the problem \textsc{Max SAT($\Lambda'$)} is equivalent to (unweighted)
\textsc{Directed Subset Feedback Edge Set}:
every constraint $x < y$ is equivalent to a red arc $(x,y)$ 
and every constraint $x \leq y$ is equivalent to a non-red arc $(x,y)$. 

Therefore, the unresolved status of the parameterized complexity
of \textsc{Directed Symmetric Multicut} stands as the main obstacle
to obtain a dichotomy for parameterized complexity of \textsc{Max SAT($\Lambda$)}
for temporal CSP languages $\Lambda$, parameterized by the deletion budget $k$.

\bibliographystyle{plainurl}
\bibliography{refs}

\end{document}